\newenvironment{proof}{\noindent {\bf Proof.}}{\bigskip}
  \newtheorem{lemma}{\textbf{Lemma}}[section]%
  \newtheorem{theorem}[lemma]{\textbf{Theorem}}%
\begin{document}

\centerline{\Large\bf Minimum Manhattan network problem in normed planes}

\centerline{\Large\bf with polygonal balls: a factor 2.5 approximation
algorithm}

\vspace{5mm}

\centerline{\large {\sc N. Catusse, V. Chepoi, K. Nouioua, and Y. Vax\`es}}

\vspace{5mm}

\begin{center}
Laboratoire d'Informatique Fondamentale de Marseille,\\[0.1cm]
Facult\'{e} des Sciences de Luminy, Aix-Marseille Universit\'e,\\[0.1cm]
 F-13288 Marseille Cedex 9, France\\[0.1cm]
{\em\{{catusse,chepoi,nouioua,vaxes\}@lif.univ-mrs.fr}}
\end{center}

\vspace{5mm}

\begin{abstract} Let ${\mathcal B}$ be a centrally symmetric convex polygon of
${\mathbb R}^2$ and $||{\bf p}-{\bf q}||$ be the distance between
two points ${\bf p},{\bf q}\in {\mathbb R}^2$ in the normed plane
whose unit ball is ${\mathcal B}$. For a set $T$ of $n$ points
(terminals) in ${\mathbb R}^2$, a ${\mathcal B}$-{\it Manhattan
network} on $T$ is a network $N(T)=(V,E)$ with the property that its
edges are parallel to the directions of ${\mathcal B}$  and for
every pair of terminals ${\bf t}_i$ and ${\bf t}_j$,  the network
$N(T)$ contains a shortest ${\mathcal B}$-path between them, i.e., a
path of length $||{\bf t}_i-{\bf t}_j||.$ A {\it minimum ${\mathcal
B}$-Manhattan network} on $T$ is a ${\mathcal B}$-Manhattan network
of minimum possible length. The problem of finding minimum
${\mathcal B}$-Manhattan  networks has been introduced  by
Gudmundsson, Levcopoulos, and Narasimhan (APPROX'99) in the case
when the unit ball ${\mathcal B}$ is a square (and hence the
distance $||{\bf p}-{\bf q}||$ is the $l_1$ or the
$l_{\infty}$-distance between ${\bf p}$ and ${\bf q}$)  and it has
been shown recently by Chin, Guo, and Sun \cite{ChGuSu} to be
strongly NP-complete. Several approximation algorithms (with factors
8,4,3, and 2) for the minimum Manhattan problem are known. In this
paper, we propose a factor 2.5 approximation algorithm for the minimum
${\mathcal B}$-Manhattan network problem. The algorithm employs a
simplified version of the strip-staircase decomposition proposed in our
paper \cite{ChNouVa} and subsequently used in other factor 2
approximation algorithms for the minimum Manhattan problem.
\end{abstract}

\noindent
{\it Keywords.} Normed plane, distance, geometric network design, Manhattan network,
approximation algorithms.

 \section{Introduction}

\subsection{Normed planes}  Given a compact, centrally symmetric,
convex set ${\mathcal B}$ in the plane ${\mathbb R}^2,$ one can define a norm
$||\cdot||:=||\cdot||_{{\mathcal B}}: {\mathbb R}^2\rightarrow
{\mathbb R}^+$ by setting $||{\bf v}||=\lambda,$ where ${\bf
v}=\lambda {\bf u}$ and ${\bf u}$ is a unit vector belonging to the
boundary of ${\mathcal B}.$ We can then define a metric
$d:=d_{\mathcal B}$ on ${\mathbb R}^2$ by setting $d({\bf p},{\bf
q})=||{\bf p}-{\bf q}||.$ The resulting metric space $({\mathbb
R}^2,d_{\mathcal B})$ is called a {\it normed} (or {\it Minkowski})
plane with unit disk (gauge) ${\mathcal B}$ \cite{BoMaSo,Th}.
In this paper, we consider normed planes in which the unit ball
${\mathcal B}$ is a centrally symmetric convex polygon (i.e., a {\it
zonotope}) of ${\mathbb R}^2.$ We denote by ${\bf b}_0,\ldots {\bf
b}_{2m-1}$ the $2m$ vertices of ${\mathcal B}$ (in counterclockwise
order around the circle) as well as the $2m$ unit vectors that
define these vertices. By central symmetry of ${\mathcal B},$ ${\bf
b}_k=-{\bf b}_{k+m}$ for $k=0,\ldots, m-1.$ A {\it legal
$k$-segment} of $({\mathbb R}^2,d_{\mathcal B})$ is a segment  ${\bf
p}{\bf q}$ lying on a line parallel to the line passing via ${\bf
b}_k$ and ${\bf b}_{k+m}.$ A {\it legal path} $\pi({\bf p},{\bf q})$
between two points ${\bf p},{\bf q}$ of ${\mathbb R}^2$ is any path
connecting ${\bf p}$ and ${\bf q}$ in which all edges are legal
segments.   The length of $\pi({\bf p},{\bf q})$ is the sum of
lengths of its edges. A {\it shortest ${\mathcal B}$-path} between
${\bf p}$ and ${\bf q}$ is a legal $({\bf p},{\bf q})$-path of
minimum length. The best known example of normed planes with
polygonal unit balls is the $l_1$-plane (also called the rectilinear
plane) with norm $||{\bf v}||=|x({\bf v})|+|y({\bf v})|.$ The unit
ball of the $l_1$-plane is a square whose two diagonals lie on the
$x$-axis and $y$-axis, respectively. The {\it $l_1$-distance}
between two points ${\bf p}$ and ${\bf q}$ is $d({\bf p},{\bf
q}):=||{\bf p}-{\bf q}||_1=|x({\bf p})-x({\bf q})|+|y({\bf
p})-y({\bf q})|.$ The legal paths of the rectilinear plane are the
paths consisting of horizontal and vertical segments, i.e.,
rectilinear paths. Another important particular case of polygonal
norms is that of $\lambda$-norms (alias uniform norms)
\cite{BrZa,BrThWeZa} for which the unit ball ${\mathcal B}$ is a
regular polygon.

\subsection{Minimum Manhattan and ${\mathcal B}$-Manhattan network problems}

 A {\it rectilinear network} $N=(V,E)$  in ${\mathbb R}^2$ consists of a
 finite set $V$ of points  and  horizontal and vertical
 segments connecting pairs of points of $V.$  The {\it length}  of  $N$
 is the sum of lengths of its edges.  Given a finite set $T$
 of points in the plane, a {\it Manhattan network}
 \cite{GuLeNa} on $T$ is a rectilinear network $N(T)=(V,E)$ such
 that $T\subseteq V$ and for every pair of points in $T,$ the
 network $N(T)$ contains a shortest rectilinear path between them. A
 {\it minimum Manhattan network} on $T$ is a Manhattan network of minimum
 possible length and the Minimum Manhattan Network  problem ({\it MMN problem})
 is to find such a network.

More generally, given a zonotope ${\mathcal B}\subset {\mathbb
R}^2,$ a ${\mathcal B}$-network $N=(V,E)$ consists of a finite set
$V$ of points  and  legal segments connecting pairs of points of $V$
(the edges of $N$). The {\it length} $l(N)$ of $N$ is the sum of
lengths of its edges.  Given a set $T=\{ {\bf t}_1,\ldots,{\bf
t}_n\}$ of $n$ points (called {\it terminals}),   a {\it ${\mathcal
B}$-Manhattan network} on $T$ is a ${\mathcal B}$-network
$N(T)=(V,E)$ such that $T\subseteq V$ and for every pair of
terminals in $T,$ the
 network $N(T)$ contains a shortest ${\mathcal B}$-path between them (see Fig. \ref{bman}). A
 {\it minimum ${\mathcal B}$-Manhattan network} on $T$ is a ${\mathcal
B}$-Manhattan network of minimum
 possible length and the Minimum ${\mathcal B}$-Manhattan Network  problem ({\it $\mathcal B$-MMN problem})
 is to find such a network. Fig. \ref{bman_bis} illustrates the evolution of a minimum ${\mathcal
B}$-Manhattan network defined on the same set of terminals when the number of directions in the unit ball
${\mathcal B}$ is increasing (the directions of the unit ball are indicated at the upper left corner of each figure).

\begin{figure}
 \begin{minipage}[b]{.55\linewidth}
  \centering \includegraphics[width=\linewidth]{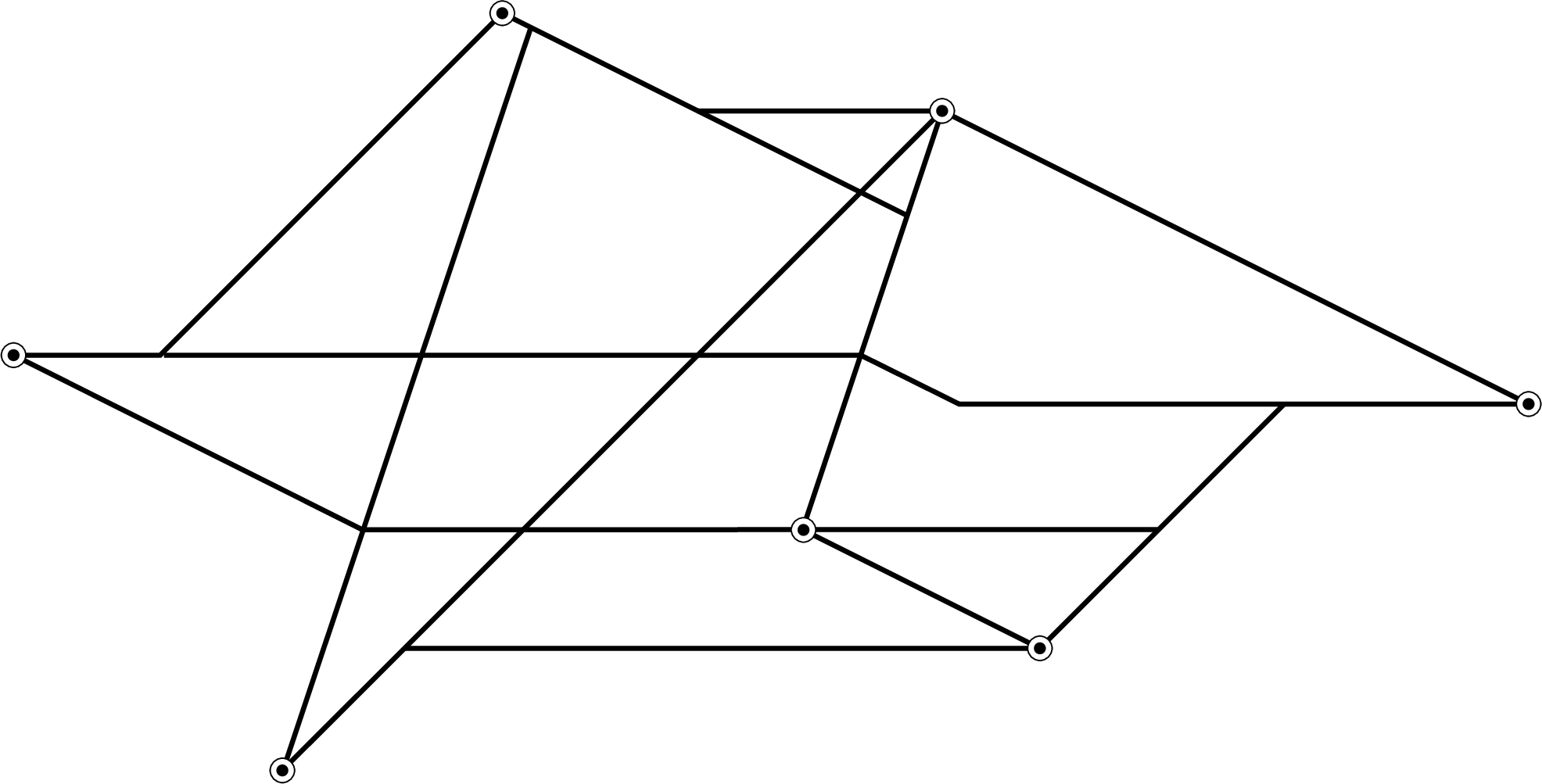}
  \caption{A ${\mathcal B}$-Manhattan network in the normed plane whose unit
ball
is depicted in  Fig. \ref{normed_plane} \label{bman}}
 \end{minipage} \hfill
 \begin{minipage}[b]{.40\linewidth}
  \centering \includegraphics[width=\linewidth]{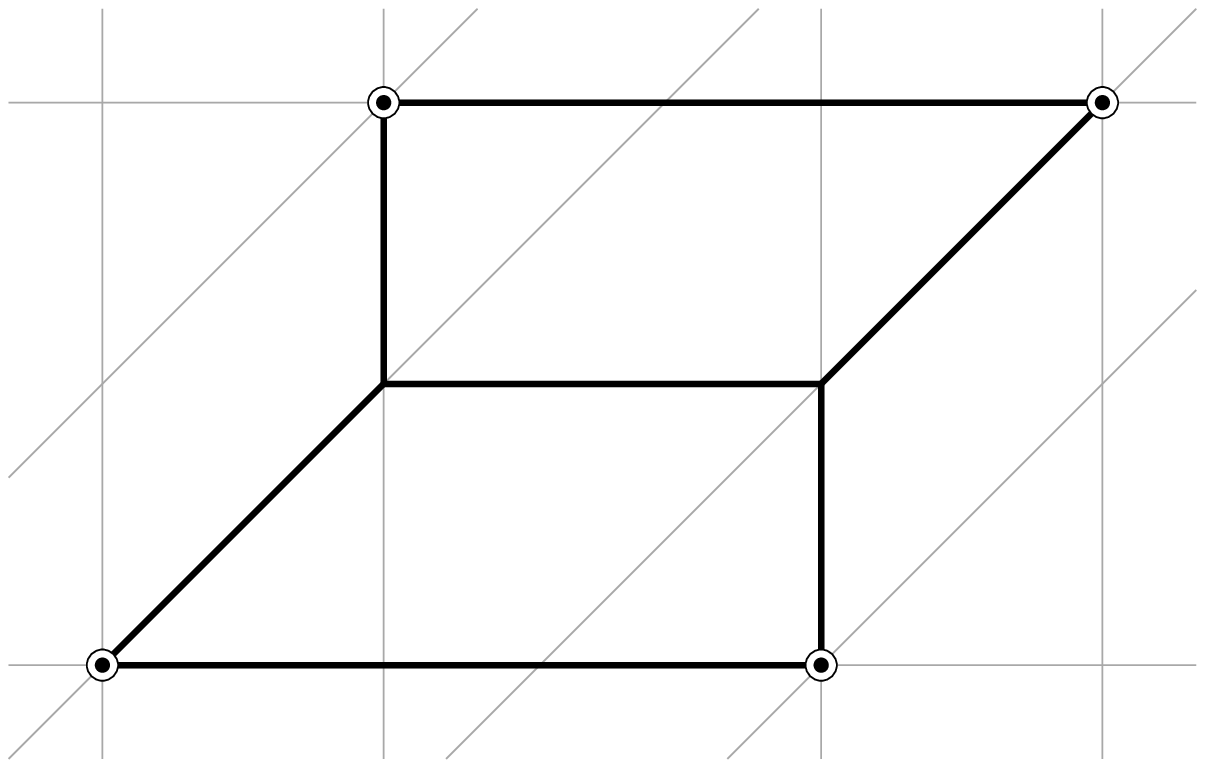}
  \caption{The unique optimal solution for this instance does not belong to the
grid $\Gamma$ (the unit ball ${\mathcal B}$ is a hexagon) \label{cexample}}
 \end{minipage}
\end{figure}

\subsection{Known results}

The minimum Manhattan  network problem has been introduced  by
Gudmundsson, Levcopoulos, and Narasimhan  \cite{GuLeNa}.
  Gudmundsson et al. \cite{GuLeNa}
proposed an $O(n^3)$-time 4-approximation algorithm, and an
$O(n\log{n})$-time 8-approximation algorithm. They also conjectured
that there exists a 2-approximation algorithm for this problem and
asked if this problem is NP-complete. Quite recently, Chin, Guo, and
Sun \cite{ChGuSu} solved this last open question from \cite{GuLeNa}
and established that indeed the minimum Manhattan network problem is
strongly  NP-complete. Kato, Imai, and Asano \cite{KaImAs} presented
a 2-approximation algorithm, however, their correctness proof is
incomplete (see \cite{BeWoWiSh}). Following \cite{KaImAs}, Benkert,
Wolff, Shirabe, and  Widmann \cite{BeWoWiSh} described  an
$O(n\log{n})$-time 3-approximation algorithm and presented a
mixed-integer programming formulation of the MMN problem.  Nouioua
\cite{Nou} and later  Fuchs and Schulze \cite{FuSch} presented two
simple $O(n\log{n})$-time 3-approximation algorithms. The first
correct 2-approximation algorithm (thus solving the first open
question from \cite{GuLeNa}) was presented by Chepoi, Nouioua, and
Vax\`es \cite{ChNouVa}. The algorithm is based on a strip-staircase
decomposition of the problem and uses a rounding method applied to
the optimal solution of the flow based linear program described in
\cite{BeWoWiSh}.  In his PhD thesis, Nouioua \cite{Nou}
described a $O(n\log{n})$-time 2-approximation algorithm based on
the primal-dual method from linear programming and the
strip-staircase decomposition.  In 2008, Guo, Sun, and Zhu
\cite{GuSuZh1,GuSuZh} presented two combinatorial factor 2
approximation algorithms, one with complexity $O(n^2)$ and another
with complexity $O(n\log{n})$ (see also the PhD thesis \cite{Sch} of
Schulze for yet another $O(n\log{n})$-time 2-approximation algorithm).
Finally, Seibert and Unger \cite{SeUn} announced a 1.5-approximation
algorithm, however the conference format of their paper does not
permit to understand the description of the algorithm and to check
its claimed performance guarantee (a counterexample that an
important intermediate step of their algorithm is incorrect was
given in \cite{FuSch,Sch}). Quite surprisingly, despite a
considerable number of prior work on minimum Manhattan network
problem,  no previous paper, to our knowledge, consider its
generalization to normed planes.

\begin{figure}
\begin{center}
\begin{tabular}{cccc}
\includegraphics[scale=0.35]{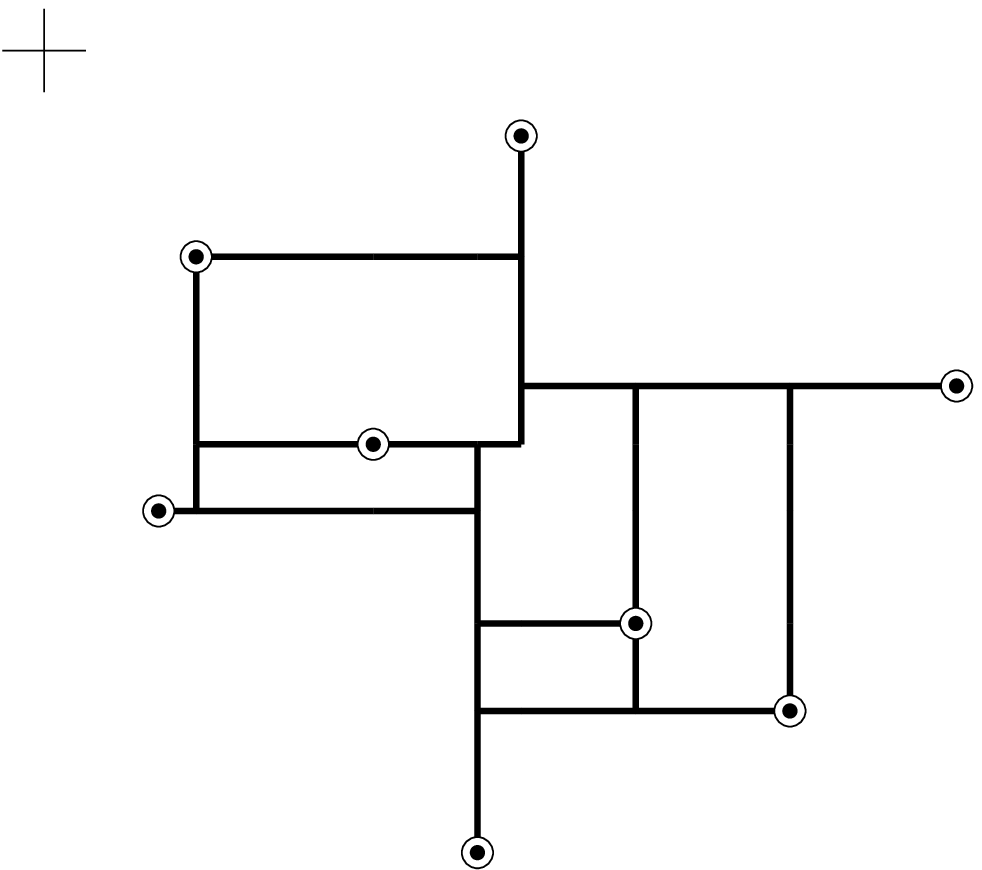} & \includegraphics[scale=0.35]{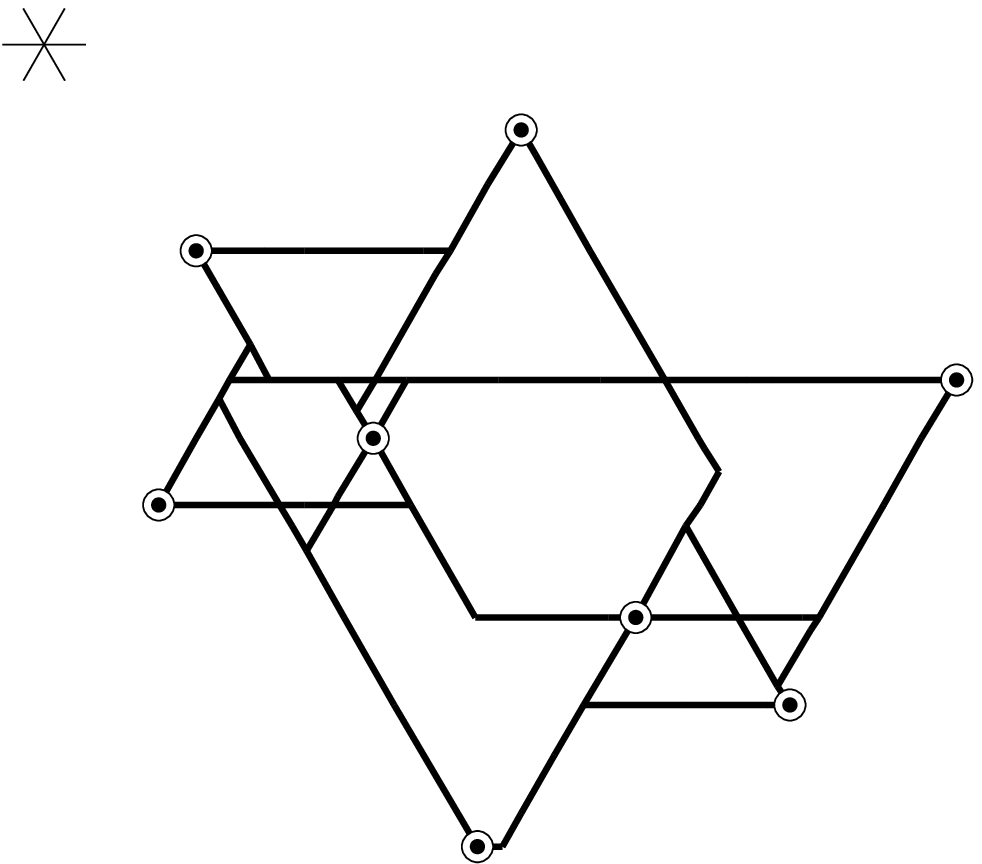} & \includegraphics[scale=0.35]{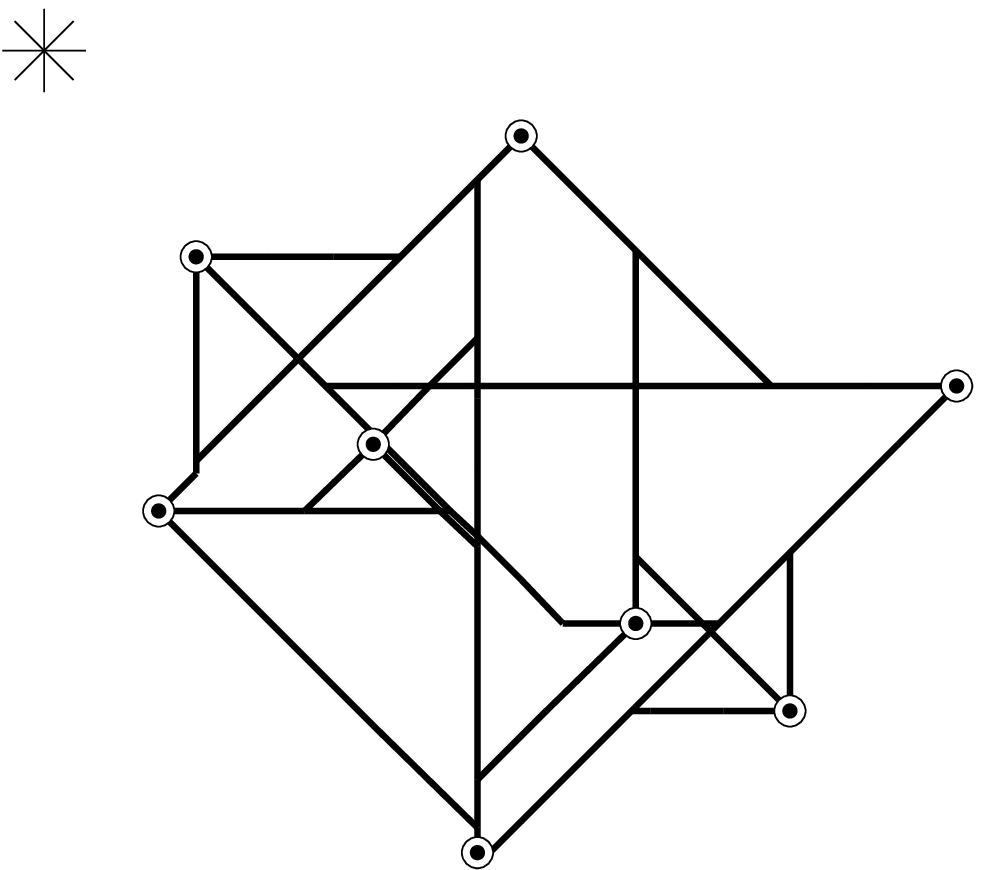} & \includegraphics[scale=0.35]{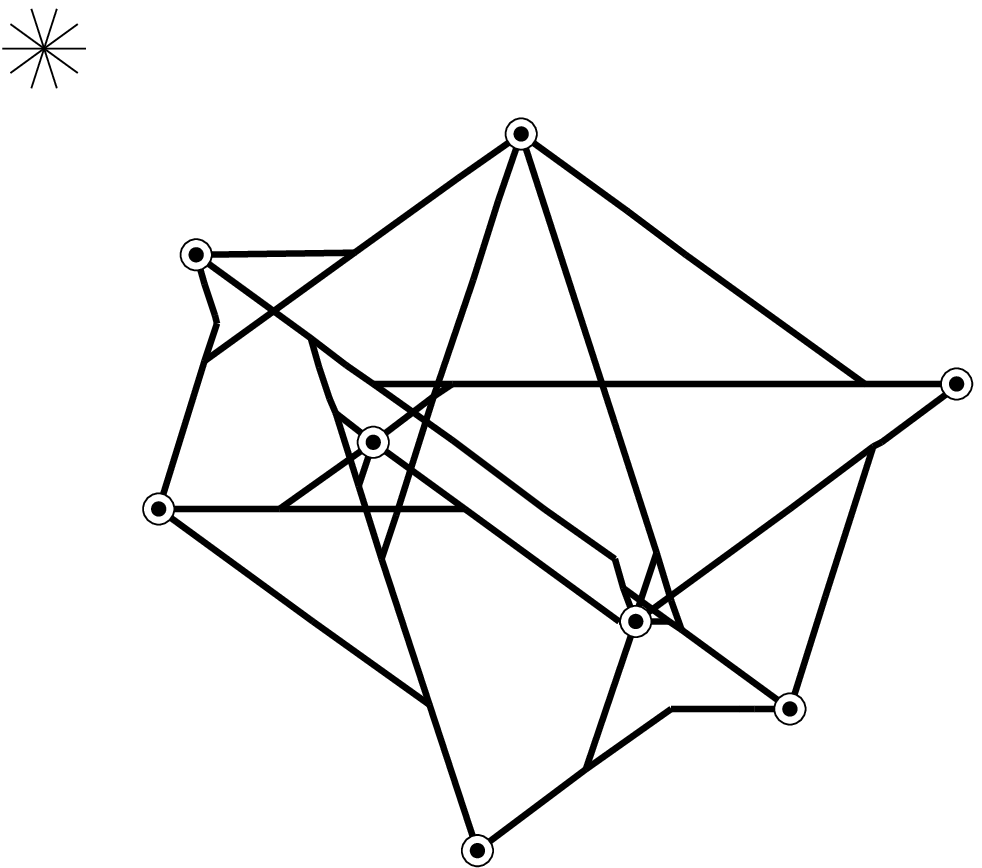}\\
\includegraphics[scale=0.35]{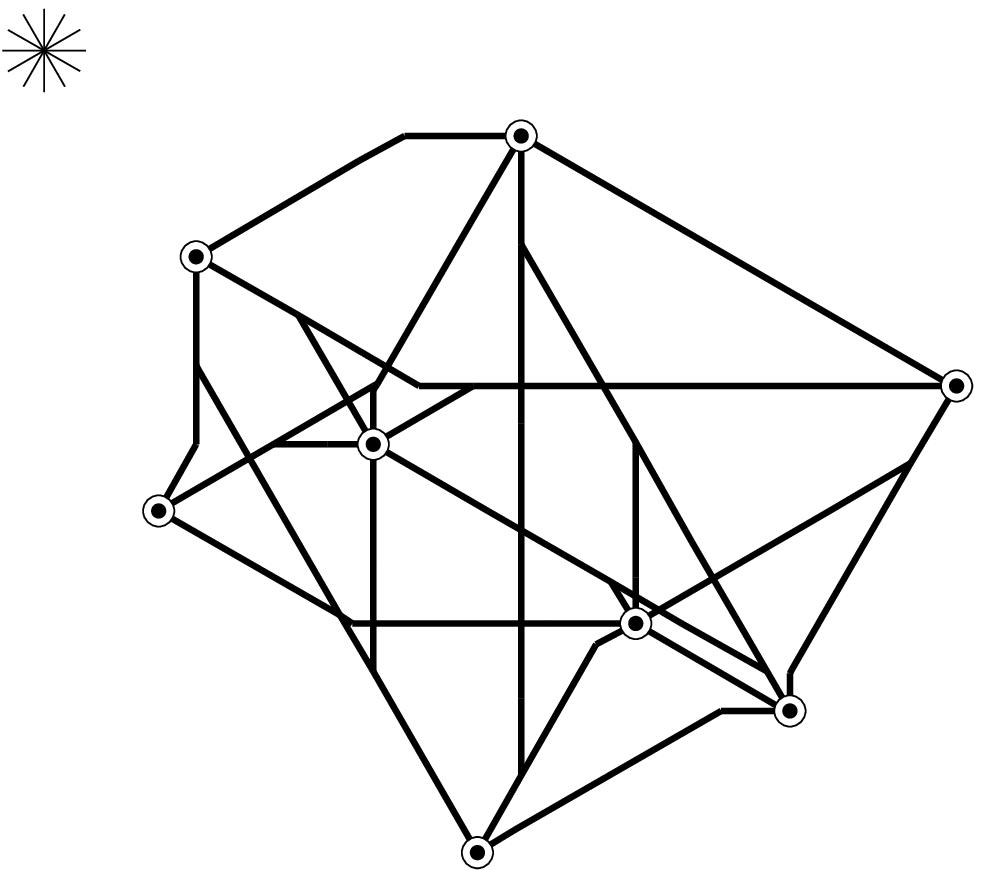} & \includegraphics[scale=0.35]{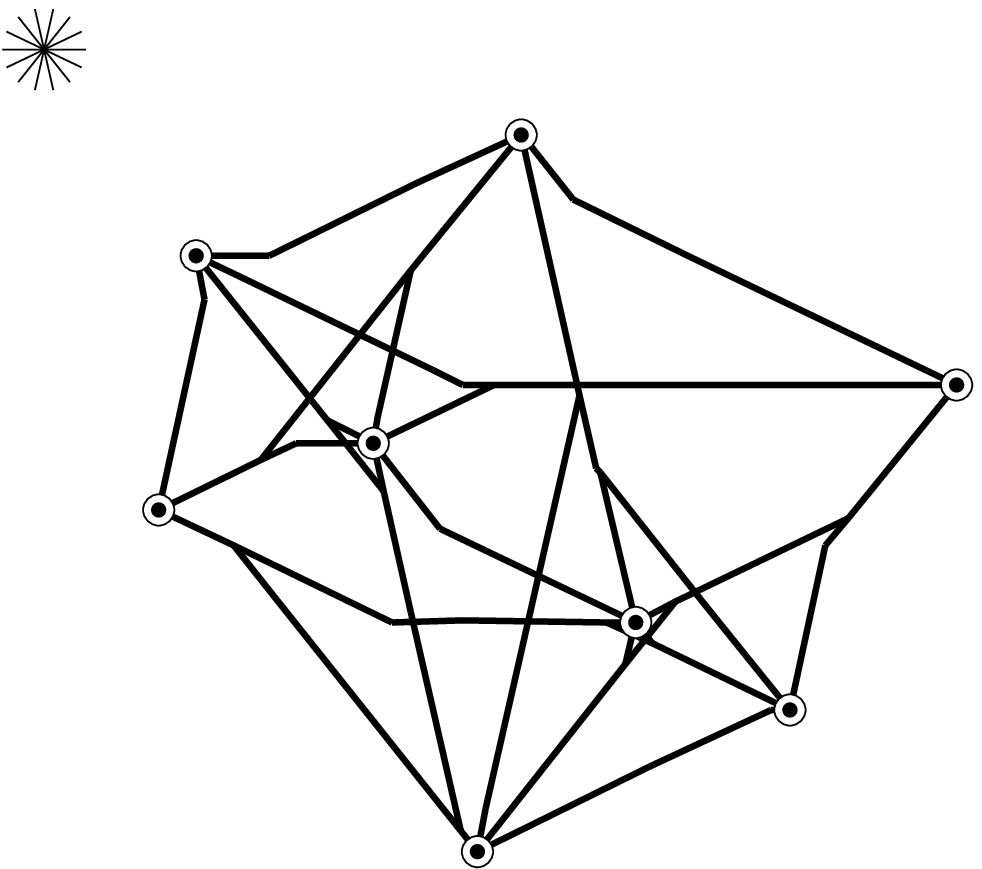} & \includegraphics[scale=0.35]{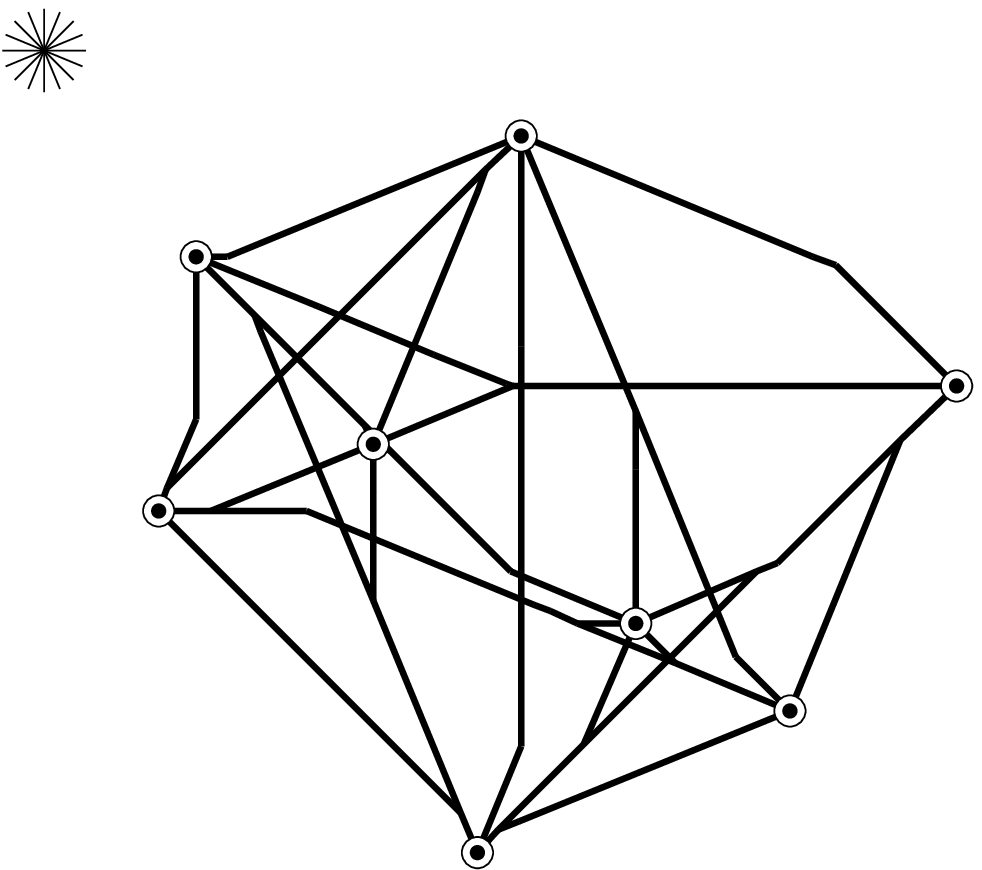} & \includegraphics[scale=0.35]{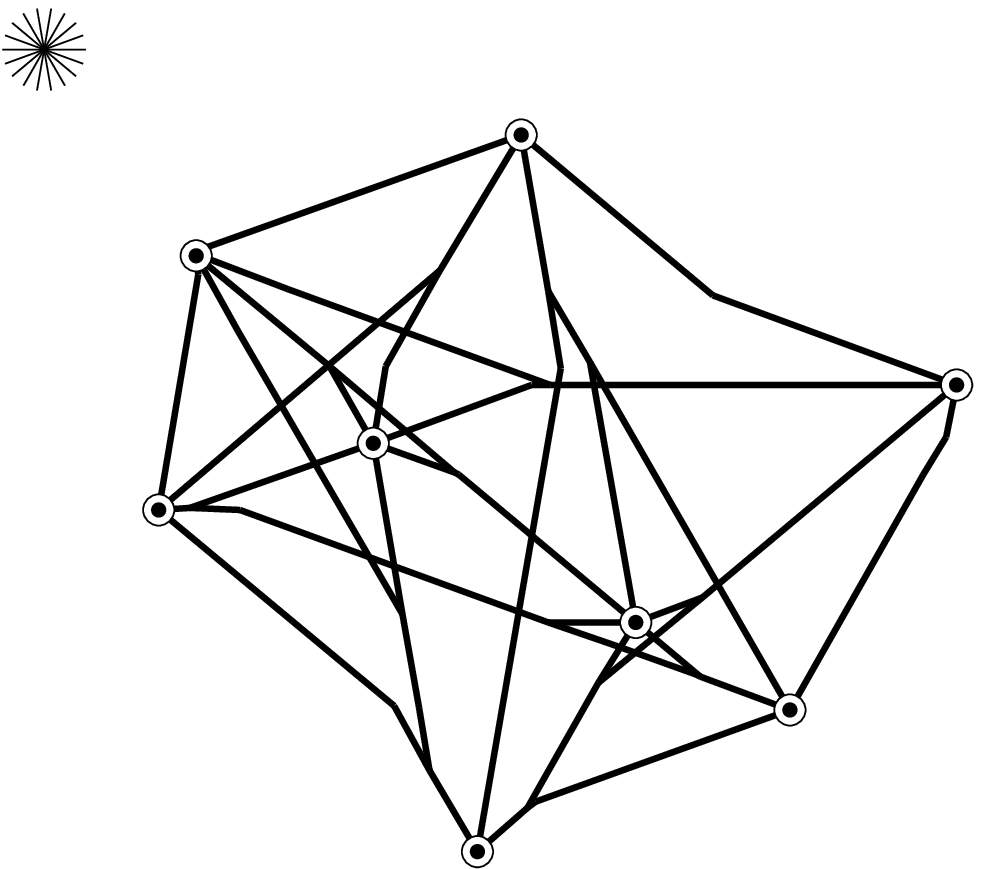}\\
\includegraphics[scale=0.35]{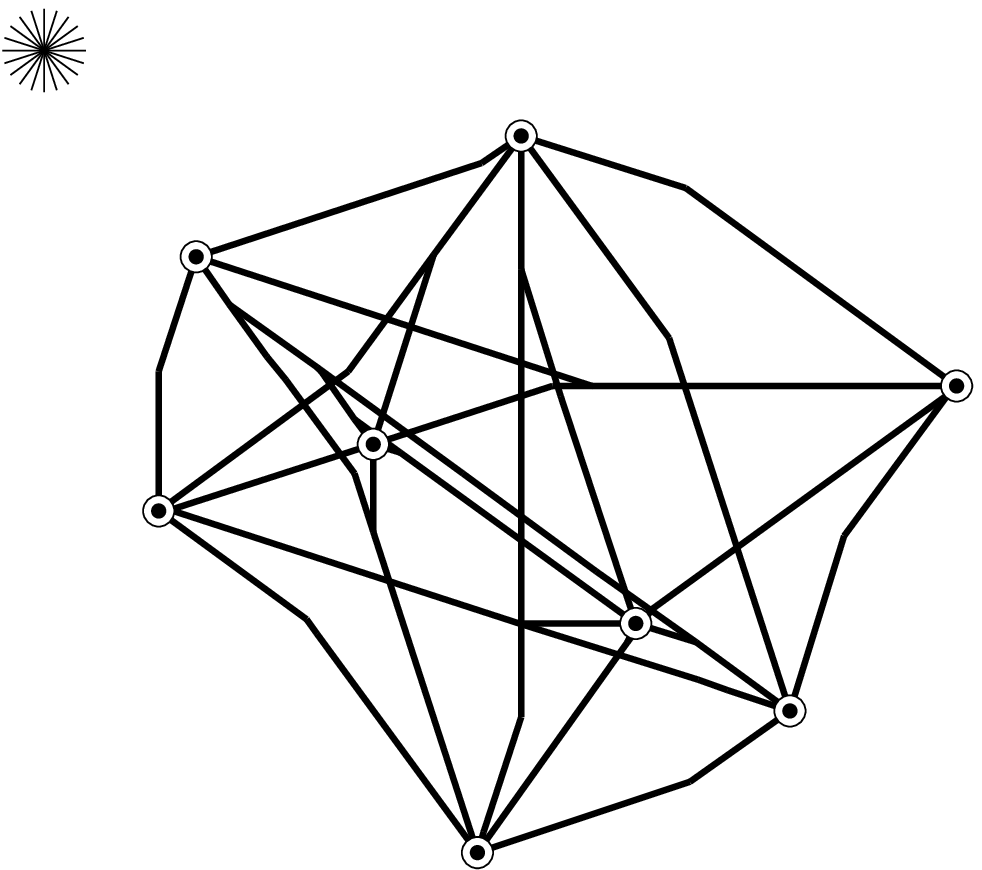} & \includegraphics[scale=0.35]{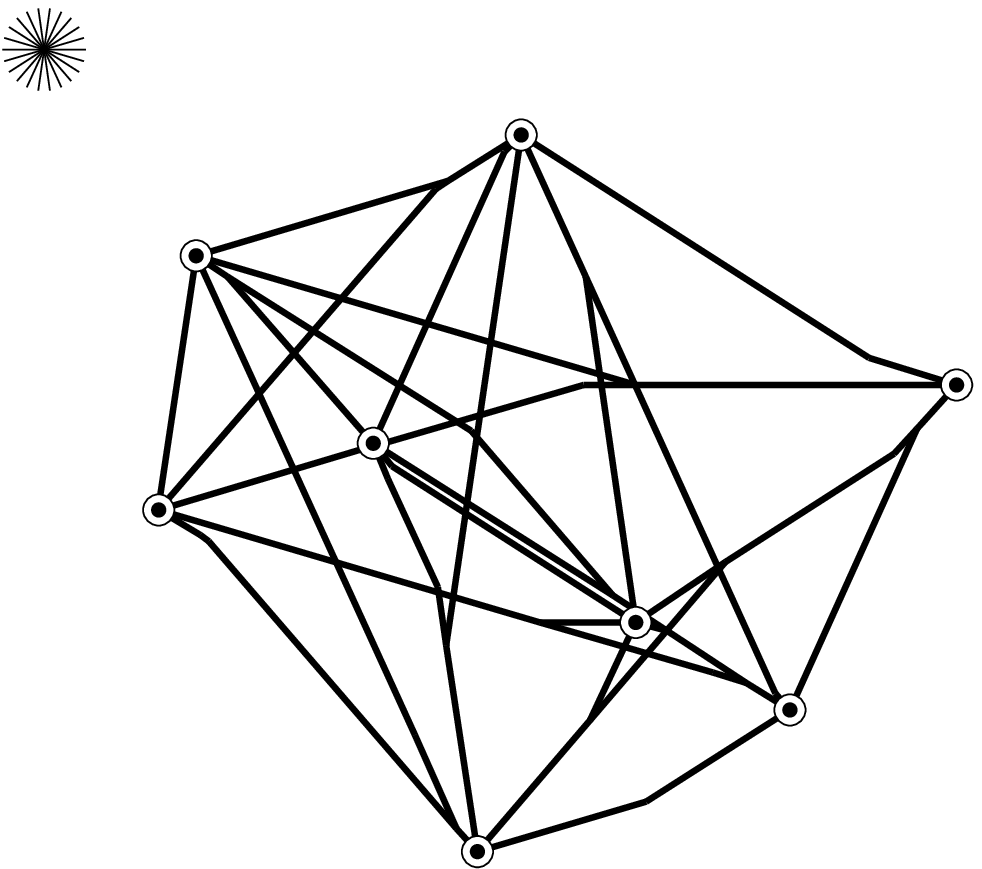} & \includegraphics[scale=0.35]{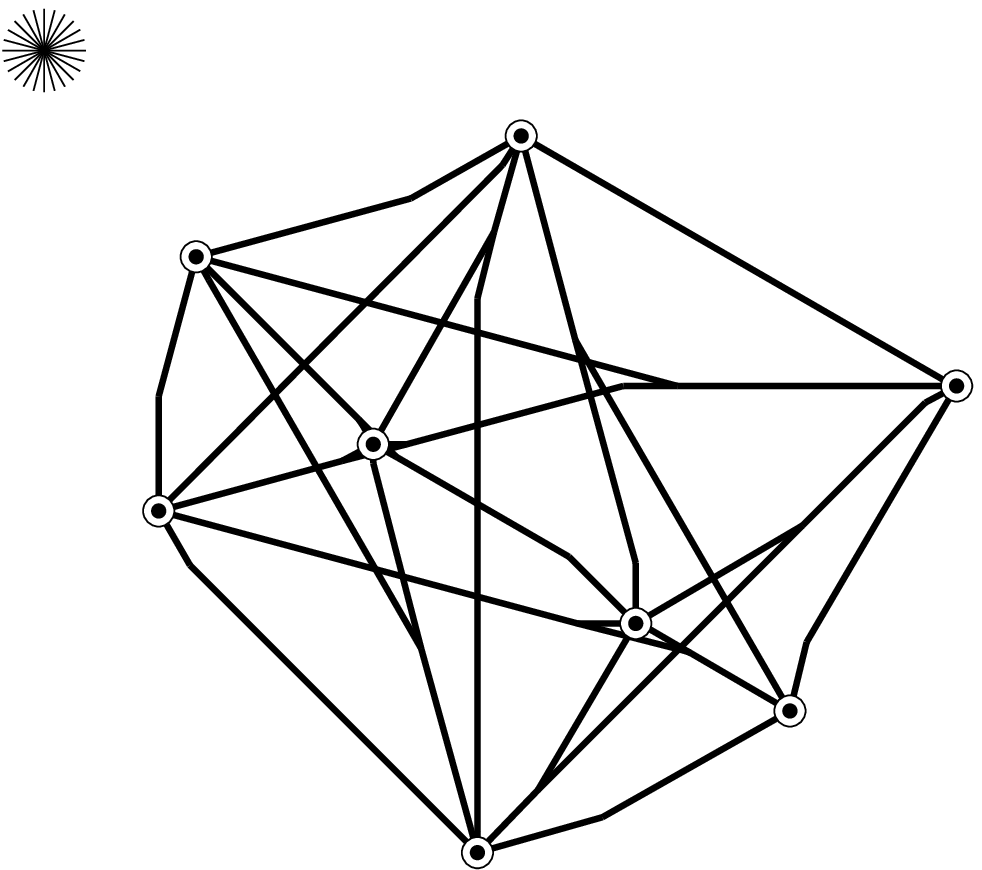} & \includegraphics[scale=0.35]{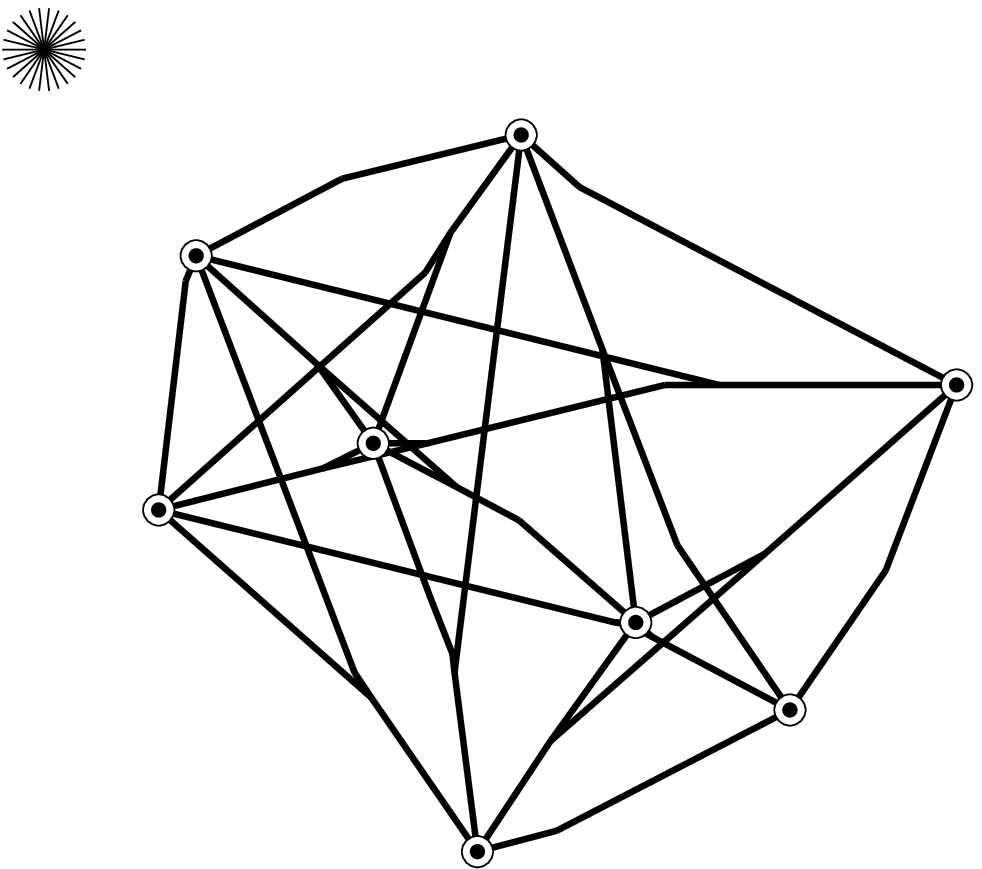}\\
\end{tabular}
\caption{``Morphing'' a minimum ${\mathcal
B}$-Manhattan network}
\label{bman_bis}
\end{center}
\end{figure}


Gudmundsson et al. \cite{GuLeNa} introduced the minimum Manhattan
networks in connection with the construction of sparse geometric
spanners. Given a set $T$ of $n$ points in a normed plane  and a
real number $t\ge 1$, a geometric network $N$ is a {\it $t$-spanner}
for $T$ if for each pair of points ${\bf p},{\bf q}\in T,$ there
exists a $({\bf p},{\bf q})$-path in $N$ of length at most $t$ times
the distance $\|{\bf p}-{\bf q}\|$ between ${\bf p}$ and ${\bf q}.$
In the Euclidian plane and more generally, for normed planes with
round balls, the line segment is the unique shortest path between
two endpoints, and therefore the unique $1$-spanner of $T$ is the
complete graph on $T.$ On the other hand, if the unit ball of the
norm is a polygon, the points are connected by several shortest
${\mathcal B}$-paths, therefore the problem of finding the sparsest $1$-spanner
becomes non trivial. In this connection, minimum ${\mathcal
B}$-Manhattan networks are precisely the optimal $1$-spanners.
Sparse geometric spanners have applications in VLSI circuit design,
network design, distributed algorithms and other areas, see for
example the survey of \cite{Epp} and the book \cite{NaSm}. Lam,
Alexandersson, and Pachter \cite{LaAlPa} suggested to use minimum
Manhattan networks to design efficient search spaces for pair hidden
Markov model (PHMM) alignment algorithms.

Algorithms for solving different distance problems in normed spaces
with polygonal and polyhedral balls were proposed by Widmayer, Wu,
and Wang \cite{WiWuWa} (for more references and a systematic study
of such problems, see the book by Fink and Wood \cite{FiWo}). There
is also an extensive bibliography on facility location problems in
normed spaces with polyhedral balls, see for example
\cite{DuMi,ThWaWe}. Finally, the minimum Steiner tree problem in the
normed planes was a subject of intensive investigations, both from
structural and algorithmic points of view;
\cite{BrZa,BrThWeZa,DuGrLiWa} is just a short sample of papers on
the subject.

\section{Preliminaries}

\subsection{Definitions, notations, auxiliary results}

We continue  by setting some basic definitions, notations, and known results.
Let $\mathcal B$ be a zonotope of ${\mathbb R}^2$ with
$2m$ vertices ${\bf b}_0,\ldots {\bf b}_{2m-1}$ having its center of
symmetry at the origin of coordinates (see Fig. \ref{normed_plane}
for an example).  The  segment $s_k:={\bf b}_k{\bf b}_{{k+1}({\rm
mod} ~2m)}$ is a {\it side} of ${\mathcal B}$.  We will call the line
${\ell}_i$ passing via the points ${\bf b}_k$ and ${\bf b}_{k+m}$ an
{\it extremal line} of $\mathcal B$.  Two consecutive extremal lines
${\ell}_k$ and ${\ell}_{k+1}$ defines two opposite {\it elementary
$k$-cones} $C_k$ and $C_{k+m}=-C_k$ containing the sides $s_k$ and
$s_{k+m},$ respectively.  We extend this terminology, and call {\it
elementary $k$-cones with apex ${\bf v}$} the cones $C_k({\bf
v})=C_k+{\bf v}$ and $-C_k({\bf v})=C_{k+m}+{\bf v}$ obtained by
translating the cones $C_k$ and $C_{k+m}$ by the vector ${\bf v}.$
We will call a pair of consecutive lines $D_k=\{
{\ell}_k,{\ell}_{k+1}\}$ a {\it direction} of the normed
plane. Denote by ${\mathcal B}({\bf v},r)=r\cdot{\mathcal B}+{\bf
v}$ the ball of radius $r$ centered at the point ${\bf v}.$

Let $I({\bf p},{\bf q})=\{ {\bf z}\in {\mathbb R}^2: d({\bf p},{\bf
q})=d({\bf p},{\bf z})+d({\bf z},{\bf q})\}$ be the {\it interval}
between ${\bf p}$ and ${\bf q}.$ The inclusion ${\bf p}{\bf
q}\subseteq I({\bf p},{\bf q})$ holds for all normed spaces. If
${\mathcal B}$ is round, then ${\bf p}{\bf q}=I({\bf p},{\bf q}),$
i.e., the shortest path between ${\bf p}$ and ${\bf q}$ is unique.
Otherwise, $I({\bf p},{\bf q})$ may host a continuous set of
shortest paths.  The intervals $I({\bf p},{\bf q})$ in a normed
plane (and, more generally, in a normed space) can be constructed in
the following pretty way, described, for example, in the book
\cite{BoMaSo}. If ${\bf p}{\bf q}$ is a legal segment, then ${\bf
p}{\bf q}$ is the unique shortest path between ${\bf p}$ and ${\bf
q},$ whence $I({\bf p},{\bf q})={\bf p}{\bf q}.$ Otherwise, set
$r=d({\bf p},{\bf q}).$ Let $s'_k$ be the side of the ball
${\mathcal B}({\bf p},r)$ containing the point $\bf q$ and let
$s''_j$ be the side of the ball ${\mathcal B}({\bf q},r)$ containing
the point $\bf p$. Notice that these sides are well-defined,
otherwise  $\bf q$ is a vertex of ${\mathcal B}({\bf p},r)$ and {\bf
p}{\bf q} is a legal segment. The segments  $s'_k$ and $s''_j$ are
parallel, thus $|k-j|=m,$ say $k\le m$ and $j=k+m.$ Then $I({\bf
p},{\bf q})$ is the intersection of the elementary cones  $C_k({\bf
p})$ and $C_{k+m}({\bf q})=-C_{k}({\bf q})$  (see Fig.
\ref{normed_plane} for an illustration):

\begin{lemma} \label{interval} \emph{\cite{BoMaSo}} $I({\bf p},{\bf q})=C_k({\bf
p})\cap (-C_{k}({\bf q})).$
\end{lemma}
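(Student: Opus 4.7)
The plan is to prove both inclusions of the claimed equality, leaning on one pervasive fact about polygonal norms: the gauge is linear on every elementary cone. Concretely, if $\mathbf{v}=\alpha\mathbf{b}_i+\beta\mathbf{b}_{i+1}$ with $\alpha,\beta\ge 0$, then $\|\mathbf{v}\|=\alpha+\beta$, because the side $s_i=\mathbf{b}_i\mathbf{b}_{i+1}$ lies on the unit circle and hence $\tfrac{\mathbf{v}}{\alpha+\beta}$ is a convex combination of unit vectors. I would state this once at the outset, because both directions of the lemma reduce to it.

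For the inclusion $C_k(\mathbf{p})\cap(-C_k(\mathbf{q}))\subseteq I(\mathbf{p},\mathbf{q})$, I would take $\mathbf{z}$ in the intersection, write $\mathbf{z}-\mathbf{p}=\alpha\mathbf{b}_k+\beta\mathbf{b}_{k+1}$ and $\mathbf{q}-\mathbf{z}=\gamma\mathbf{b}_k+\delta\mathbf{b}_{k+1}$ with non-negative coefficients, add them to get $\mathbf{q}-\mathbf{p}=(\alpha+\gamma)\mathbf{b}_k+(\beta+\delta)\mathbf{b}_{k+1}\in C_k$, and read off from the workhorse fact that $d(\mathbf{p},\mathbf{z})+d(\mathbf{z},\mathbf{q})=(\alpha+\beta)+(\gamma+\delta)=(\alpha+\gamma)+(\beta+\delta)=d(\mathbf{p},\mathbf{q})$. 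For the converse inclusion, given $\mathbf{z}\in I(\mathbf{p},\mathbf{q})$ I set $\mathbf{u}=\mathbf{z}-\mathbf{p}$, $\mathbf{v}=\mathbf{q}-\mathbf{z}$ and establish that the triangle-inequality equality $\|\mathbf{u}+\mathbf{v}\|=\|\mathbf{u}\|+\|\mathbf{v}\|$ forces $\mathbf{u}$ and $\mathbf{v}$ to lie in a common elementary cone $C_i$. Once that is in hand, $\mathbf{q}-\mathbf{p}=\mathbf{u}+\mathbf{v}\in C_i$; but by hypothesis $\mathbf{q}$ lies in the relative interior of the side $s'_k$ of $\mathcal{B}(\mathbf{p},r)$ (if it lay at a vertex, $\mathbf{pq}$ would be a legal segment), so $\mathbf{q}-\mathbf{p}$ belongs to the interior of $C_k$ and to no other elementary cone, giving $i=k$. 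Hence $\mathbf{z}-\mathbf{p}\in C_k$ and $\mathbf{q}-\mathbf{z}\in C_k$, i.e., $\mathbf{z}\in C_k(\mathbf{p})\cap(-C_k(\mathbf{q}))$.

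The main obstacle is the equality case of the triangle inequality for polygonal norms, which plays the role that strict convexity plays in the Euclidean setting. I would prove it by contradiction: assume $\mathbf{u}\in C_i$, $\mathbf{v}\in C_j$ with $i\ne j$ and both nonzero. Then $\mathbf{u}/\|\mathbf{u}\|$ lies on the side $s_i$ and $\mathbf{v}/\|\mathbf{v}\|$ on the non-adjacent side $s_j$ of $\partial\mathcal{B}$, so the midpoint of these two unit vectors is separated from $\partial\mathcal{B}$ by the chord cutting off a vertex of $\mathcal{B}$ between them, and therefore lies strictly inside $\mathcal{B}$. Scaling, the vector $\mathbf{u}+\mathbf{v}$ has norm strictly less than $\|\mathbf{u}\|+\|\mathbf{v}\|$, contradicting equality. (A minor sub-case to dispatch is when $i,j$ are adjacent with $\mathbf{u},\mathbf{v}$ pointing along the shared extremal ray, which is just the boundary overlap of $C_i$ and $C_j$ and is harmless.) With this lemma in hand the two inclusions together prove the statement.
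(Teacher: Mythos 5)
The paper offers no proof of this lemma at all: it is quoted from the book of Boltyanski, Martini and Soltan, so there is nothing to compare your argument against line by line. Your proof is, however, correct in substance and is the standard argument one would expect: linearity of the gauge on each elementary cone gives the inclusion $C_k({\bf p})\cap(-C_k({\bf q}))\subseteq I({\bf p},{\bf q})$ immediately, and the converse reduces to the equality case of the triangle inequality, which for a polygonal unit ball says exactly that $\|{\bf u}+{\bf v}\|=\|{\bf u}\|+\|{\bf v}\|$ forces the normalized vectors $\hat{{\bf u}},\hat{{\bf v}}$ onto a common side of $\partial{\mathcal B}$, hence ${\bf u},{\bf v}$ into a common elementary cone; your use of the hypothesis that ${\bf p}{\bf q}$ is not legal (so ${\bf q}-{\bf p}$ lies in the \emph{interior} of $C_k$ and of no other cone) to pin down that common cone as $C_k$ is exactly the right closing step. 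Two small repairs: the relevant point of the chord is not the midpoint but the weighted combination $\bigl(\|{\bf u}\|\hat{{\bf u}}+\|{\bf v}\|\hat{{\bf v}}\bigr)/\bigl(\|{\bf u}\|+\|{\bf v}\|\bigr)$, since $\|{\bf u}+{\bf v}\|=(\|{\bf u}\|+\|{\bf v}\|)\,\|\lambda\hat{{\bf u}}+(1-\lambda)\hat{{\bf v}}\|$ with $\lambda=\|{\bf u}\|/(\|{\bf u}\|+\|{\bf v}\|)$; the argument is unchanged because every interior point of the chord lies strictly inside ${\mathcal B}$ whenever $\hat{{\bf u}},\hat{{\bf v}}$ do not share a closed side. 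Second, the dichotomy should be phrased as ``no common side'' rather than ``$i\neq j$,'' precisely because of the boundary overlap you flag in your parenthetical; and the degenerate cases ${\bf z}={\bf p}$ or ${\bf z}={\bf q}$ should be dispatched separately since then one of ${\bf u},{\bf v}$ vanishes. With those adjustments the proof is complete and self-contained, which is arguably a service the paper does not render.
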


\begin{center}
\begin{figure}
\hspace*{5mm}
\includegraphics[scale=0.4]{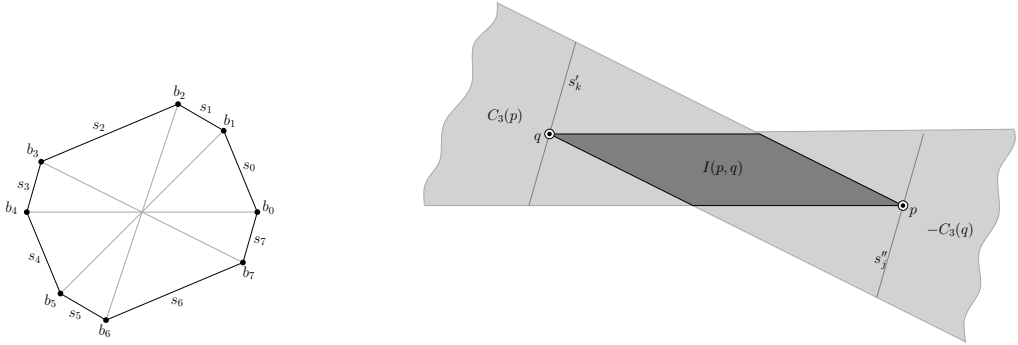}
\caption{A unit ball ${\mathcal B}$ and an interval
$I({\bf p},{\bf q})$} \label{normed_plane}
\end{figure}
\end{center}

An immediate consequence of this result is the following characterization of
shortest ${\mathcal B}$-paths between two points $\bf p$ and $\bf q$.

\begin{lemma} \label{shortest_legal_path} If  ${\bf p}{\bf q}$ is a legal
segment, then
${\bf p}{\bf q}$ is the unique shortest  ${\mathcal B}$-path. Otherwise, if
$I({\bf p},{\bf q})=C_k({\bf p})\cap (-C_{k}({\bf q})),$ then any
shortest  ${\mathcal B}$-path $\pi({\bf p},{\bf q})$ between  ${\bf p}$ and
${\bf q}$ has only $k$-segments and $(k+1)$-segments as edges.
Moreover, $\pi({\bf p},{\bf q})$ is a shortest ${\mathcal B}$-path if and
only if it is monotone with respect to ${\ell}_k$ and
${\ell}_{k+1},$ i.e., the intersection of $\pi({\bf p},{\bf q})$
with any line $\ell$ parallel to the lines ${\ell}_k,{\ell}_{k+1}$
is empty, a point, or a (legal) segment.
\end{lemma}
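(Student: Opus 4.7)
The plan is to prove the two cases of the lemma separately. In the case where ${\bf p}{\bf q}$ is itself a legal segment, the definition of the norm gives $\|{\bf p}-{\bf q}\|=\mathrm{length}({\bf p}{\bf q})$, so ${\bf p}{\bf q}$ is a shortest $\mathcal B$-path; uniqueness follows from the fact (already noted in the discussion preceding Lemma 2.1) that in this situation $I({\bf p},{\bf q})={\bf p}{\bf q}$, combined with the standard observation that every shortest $\mathcal B$-path lies inside $I({\bf p},{\bf q})$ (obtained by splitting the path at any intermediate point and applying the triangle inequality twice).

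For the non-legal case the key tool is the linear functional $g_k:\mathbb R^2\to\mathbb R$ supporting the unit ball $\mathcal B$ along the side $s_k={\bf b}_k{\bf b}_{k+1}$: I take $g_k$ to be the unique linear form with $g_k({\bf b}_k)=g_k({\bf b}_{k+1})=1$, so that $g_k\equiv 1$ on $s_k$ and, by convexity of $\mathcal B$, one has $g_k({\bf v})\le\|{\bf v}\|$ for every ${\bf v}\in\mathbb R^2$, with equality if and only if ${\bf v}$ is a non-negative combination $\alpha{\bf b}_k+\beta{\bf b}_{k+1}$. Writing ${\bf q}-{\bf p}=\alpha_0{\bf b}_k+\beta_0{\bf b}_{k+1}$ with $\alpha_0,\beta_0>0$ via Lemma 2.1 gives $d({\bf p},{\bf q})=\alpha_0+\beta_0=g_k({\bf q}-{\bf p})$. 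Then for any legal path $\pi$ from $\bf p$ to $\bf q$ with consecutive edge-vectors ${\bf v}_1,\ldots,{\bf v}_n$, linearity of $g_k$ yields
\[
\mathrm{length}(\pi)=\sum_{i=1}^n\|{\bf v}_i\|\ge\sum_{i=1}^n g_k({\bf v}_i)=g_k({\bf q}-{\bf p})=d({\bf p},{\bf q}),
\]
so $\pi$ is shortest if and only if $g_k({\bf v}_i)=\|{\bf v}_i\|$ for every $i$. Since each ${\bf v}_i$ is a signed multiple of some ${\bf b}_{j_i}$, the equality case of $g_k$ forces $j_i\in\{k,k+1\}$ and the scalar to be positive, which proves simultaneously that every shortest $\mathcal B$-path uses only $k$- and $(k+1)$-segments and that these progress in the positive ${\bf b}_k$- and ${\bf b}_{k+1}$-directions, i.e., are monotone with respect to $\ell_k$ and $\ell_{k+1}$.

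For the remaining implication ``monotone $\Rightarrow$ shortest'', I would note that any vertex ${\bf b}_j$ with $j\neq k,k+1$ has one strictly positive and one strictly negative coordinate in the basis $\{{\bf b}_k,{\bf b}_{k+1}\}$, so a legal path containing a $j$-segment for such $j$, or a $k$- or $(k+1)$-segment oriented negatively, would have to revisit some line parallel to $\ell_k$ or $\ell_{k+1}$ (to compensate the ``backward'' component), thereby violating the monotonicity condition; a monotone legal path therefore consists only of positively oriented $k$- and $(k+1)$-segments, whose lengths automatically sum to $\alpha_0+\beta_0=d({\bf p},{\bf q})$. The conceptual key of the whole argument is the construction and support property of $g_k$; once these are in hand, the rest of the proof is essentially linear-algebra bookkeeping in the basis $\{{\bf b}_k,{\bf b}_{k+1}\}$.
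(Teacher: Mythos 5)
Your proof is correct, but it takes a genuinely different route from the paper's. The paper argues the forward direction by taking the \emph{first} edge ${\bf u}{\bf v}$ of a shortest path that is not a $k$- or $(k+1)$-segment, invoking Lemma \ref{interval} to get $I({\bf u},{\bf q})=C_k({\bf u})\cap(-C_k({\bf q}))$, and deriving a contradiction from ${\bf v}\notin C_k({\bf u})$; it then proves the converse (monotone $\Rightarrow$ shortest) by induction on the number of edges, using that monotonicity forces the path to stay inside $I({\bf p},{\bf q})$. You instead exhibit a single calibration: the support functional $g_k$ of ${\mathcal B}$ along the side $s_k$, with $g_k\le\|\cdot\|$ and equality exactly on the cone $C_k$. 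This gives in one stroke the lower bound $\mathrm{length}(\pi)\ge g_k({\bf q}-{\bf p})=d({\bf p},{\bf q})$ for \emph{every} legal path and the characterization of the tight ones as those whose edges are all positively oriented $k$- and $(k+1)$-segments, after which both implications reduce to coordinate bookkeeping in the basis $\{{\bf b}_k,{\bf b}_{k+1}\}$. Your approach is more self-contained (it uses Lemma \ref{interval} only to place ${\bf q}-{\bf p}$ in $C_k$, and yields the formula $d({\bf p},{\bf q})=\alpha_0+\beta_0$ as a byproduct), and it is the standard duality certificate that would generalize beyond the plane; the paper's argument is more elementary and reuses the interval machinery it has already set up. One small caveat: your passage between ``all edges positively oriented'' and the stated line-intersection form of monotonicity (the ``would have to revisit some line'' step) is argued briskly and silently excludes degenerate self-touching paths, but the paper's own converse step (``monotonicity implies the path lies entirely in $I({\bf p},{\bf q})$'') is asserted at a comparable level of detail, so I would not count this as a gap.
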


\begin{proof} The first statement immediately follows from Lemma
\ref{interval}. Suppose that ${\bf p}{\bf q}$ is not a legal segment and $I({\bf
p},{\bf q})=C_k({\bf p})\cap C_{k+m}({\bf q}).$
Let ${\bf u}{\bf v}$ be the first edge on a shortest path $\pi({\bf p},{\bf
q})$  from ${\bf p}$ to  ${\bf q}$ which is neither a $k$-segment nor
a $(k+1)$-segment. Since ${\bf u}\in I({\bf p},{\bf q})=C_k({\bf p})\cap
C_{k+m}({\bf q}),$ the point ${\bf q}$ belongs to the cone
$C_k({\bf u})$ and the point ${\bf u}$ belongs to the cone $C_{k+m}({\bf
q}),$ whence $I({\bf u},{\bf q})=C_k({\bf u})\cap C_{k+m}({\bf q}).$
Obviously, the point ${\bf v}$ belongs to $I({\bf u},{\bf q}).$ However, by the
choice of the segment ${\bf u}{\bf v}$ and the fact that ${\ell}_k$ and
${\ell}_{k+1}$
are consecutive lines that forms a direction, the point ${\bf v}$
cannot belong $C_k({\bf u}),$ a contradiction.
This shows that any shortest legal path $\pi({\bf p},{\bf q})$ between
${\bf p}$ and ${\bf q}$ has only $k$- and $(k+1)$-segments as edges.
Additionally,
the intersection of $\pi({\bf p},{\bf q})$ with any line
$\ell$ parallel to ${\ell}_k$ or ${\ell}_{k+1}$ is empty,
a point, or a (legal) segment. Indeed,
pick any two points in this intersection. Since the legal segment
defined by these points is the unique shortest path between them, it must also
belong to
the intersection of $\pi({\bf p},{\bf q})$ with $\ell$.  Conversely,
consider  a monotone path $\pi({\bf p},{\bf q})$ between ${\bf p}$ and ${\bf
q},$
namely suppose that the  intersection
of $\pi({\bf p},{\bf q})$ with any line $\ell$ parallel to
the lines ${\ell}_k$ or ${\ell}_{k+1}$ is empty, a point, or a (legal) segment).
We proceed by induction on the number of edges of  $\pi({\bf p},{\bf q}).$
The monotonicity of $\pi({\bf p},{\bf q})$ implies that $\pi({\bf p},{\bf q})$
lies entirely in the interval $I({\bf
p},{\bf q}).$ In particular, the neighbor ${\bf u}$ of ${\bf p}$ in $\pi({\bf
p},{\bf q})$ belongs to $I({\bf p},{\bf q}).$ The subpath $\pi({\bf u},{\bf q})$
of $\pi({\bf p},{\bf q})$ between ${\bf u}$ and ${\bf q}$ is monotone, therefore
by induction assumption,   $\pi({\bf u},{\bf q})$ is a shortest path between
${\bf u}$ and ${\bf q}.$ Since ${\bf p}{\bf u}$ is a legal segment and ${\bf
u}\in I({\bf p},{\bf q}),$
we immediately conclude that $\pi({\bf p},{\bf q})$ is also a shortest path
between ${\bf p}$ and ${\bf q}.$  $\Box$
\end{proof}

We continue with some notions and notations about the ${\mathcal
B}$-MMN problem. Denote by OPT$(T)$ the length of a minimum
${\mathcal B}$-Manhattan network for a set of terminals $T$. For a
direction $D_k=\{\ell_k,\ell_{k+1}\},$ denote by $F_k$ the set of
all pairs $\{i,j\}$ (or pairs of terminals $\{ {\bf t}_i,{\bf
t}_j\}$) such that any shortest ${\mathcal B}$-path between  ${\bf t}_i$ and
${\bf t}_j$ uses only $k$-segments and $(k+1)$-segments.
Equivalently, by Lemma \ref{shortest_legal_path}, $F_k$ consists of all
pairs of terminals which belong
to two opposite elementary cones $C_k({\bf v})$ and $-C_k({\bf v})$
with common apex.  For each direction $D_k$ and the set of pairs
$F_k,$ we formulate an auxiliary problem which we call {\it Minimum
1-Directional Manhattan Network problem} (or {\it 1-DMMN$(F_k)$ problem}):
find a network $N^{opt}_k(T)$ of minimum possible length such that every edge
of $N^{opt}_k(T)$ is an $k$-segment or an $(k+1)$-segment and any pair $\{
{\bf t}_i,{\bf t}_j\}$ of $F_k$ is connected in  $N^{opt}_k(T)$ by a shortest
${\mathcal B}$-path. We denote its length by ${\rm OPT}_k(T).$ We continue by
adapting to 1-DMMN the notion of a generating set introduced in \cite{KaImAs}
for MMN problem: a {\it generating set} for $F_k$ is a subset $F$ of $F_k$ with
the
property that a $\mathcal B$-Manhattan  network containing shortest
$\mathcal B$-paths for all pairs in $F$ is a 1-Directional Manhattan
network for $F_k.$

\subsection{Our approach}

Let  $N^*(T)$ be a minimum  ${\mathcal B}$-Manhattan network, i.e.,
a ${\mathcal B}$-Manhattan network of total length $l(N^*(T))={\rm
OPT}(T).$ For  each direction $D_k,$ let $N^*_k(T)$ be the set of
$k$-segments and $(k+1)$-segments of $N^*(T).$ The network
$N^*_k(T)$ is an admissible solution for 1-DMMN$(F_k),$
thus the length $l(N^*_k(T))$ of $N^*_k(T)$ is at least ${\rm
OPT}_k(T).$ Any $k$-segment of $N^*(T)$ belongs to two one-directional networks
$N^*_{k}(T)$ and $N^*_{k-1}(T).$ Vice-versa, if $N_k(T),
k=0,\ldots,m-1,$ are admissible solutions for the 1-DMMN$(F_k)$ problems,
since $\bigcup_{k=0}^{m-1} F_k=T\times T,$ the network
$N(T)=\bigcup_{k=0}^{m-1} N_k(T)$ is a ${\mathcal B}$-Manhattan
network. Moreover, if  each $N_k(T)$ is an $\alpha$-approximation
for respective 1-DMMN problem, then the network $N(T)$ is a
$2\alpha$-approximation for the minimum  ${\mathcal B}$-Manhattan
network problem. Therefore, to obtain a factor 2.5-approximation for
$\mathcal B$-MMN, we need to provide a 1.25-approximation for the
1-DMMN problem.  The remaining part of our paper describe such a
combinatorial algorithm.  The  1-DMMN problem is easier and less
restricted than the ${\mathcal B}$-MMN problem because we have to connect with
shortest paths only the pairs of terminals of the set $F_k$
corresponding to one direction $D_k$, while in case of the MMN
problem the set $T\times T$ of all pairs is partitioned into two
sets corresponding to the two directions of the $l_1$-plane. For our
purposes, we will adapt the strip-staircase decomposition of
\cite{ChNouVa}, by considering only the strips and the staircases
which ``are oriented in direction $D_k$''.

\section{One-directional strips and staircases}

In the next two sections, we assume that $D_k=\{ l_k,l_{k+1}\}$ is a
fixed but arbitrary direction of the normed plane. We recall the
definitions of vertical and horizontal strips and staircases
introduced in \cite{ChNouVa}.  Then we consider only those of them
which correspond to pairs of terminals from the set $F_k,$ which we
call one-directional strips and staircases. We formulate several
properties of one-directional strips and staircases  and we prove
those of them which do not hold for usual strips and staircases.

Denote by $L_k$ and $L_{k+1}$ the set of all lines
passing via the terminals of $T$ and parallel to the extremal lines $\ell_k$ and
$\ell_{k+1},$ respectively.
Let $\Gamma_k$ be the grid defined by the lines of $L_k$ and $L_{k+1}.$ The
following lemma can be proved in the same way as for rectilinear Steiner trees
or Manhattan networks
(quite surprisingly, this is not longer true for the ${\mathcal B}$-MMN problem:
Fig. \ref{cexample} presents an instance of ${\mathcal B}$-MMN
for which the unique optimal solution does not belong to the grid
$\Gamma:=\bigcup_{k=0}^{m-1} \Gamma_k$):

\begin{lemma} \label{grid} There exists a minimum 1-Directional Manhattan
Network for $F_k$ contained in the grid $\Gamma_k$.
\end{lemma}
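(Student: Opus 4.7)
I will adapt the classical Hanan-grid reduction that is standard for rectilinear Steiner trees and Manhattan networks. Starting from any minimum 1-DMMN $N^{opt}_k(T)$, I iteratively ``slide'' its edges onto grid lines of $\Gamma_k$ without increasing the total length and while preserving the 1-DMMN property for $F_k$.

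Suppose some edge of $N^{opt}_k(T)$ is a $k$-segment supported on a line $\ell$ parallel to $\ell_k$ with $\ell\notin L_k$; in particular $\ell$ contains no terminal. Let $v$ be a unit vector parallel to $\ell_{k+1}$, and for a scalar $t$ construct a modified network $N(t)$ by rigidly translating the sub-network $N^{opt}_k(T)\cap\ell$ (its vertices on $\ell$ together with all $k$-segments lying on $\ell$) by $tv$, while extending or shortening every $(k+1)$-segment of $N^{opt}_k(T)$ incident to $\ell$ at its $\ell$-endpoint (keeping the other endpoint fixed). The total length of the translated $k$-segments is invariant, and each incident $(k+1)$-segment contributes a linear term $\pm t$ to the length, the sign being determined by the side of $\ell$ on which its fixed endpoint lies. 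Hence $l(N(t))$ is an affine function of $t$ on the interval in which no incident $(k+1)$-segment has collapsed to zero. I choose the sign of $t$ so that $l(N(t))$ is non-increasing, and push $|t|$ to the largest admissible value. At that value, either $\ell+tv$ coincides with a line of $L_k$---so the number of non-grid $k$-lines carrying edges strictly decreases---or an incident $(k+1)$-segment has collapsed to a point and its two endpoints can be identified---so the edge count strictly decreases. Iterating this step for all non-grid $k$-lines, and then symmetrically for all non-grid $(k+1)$-lines, the procedure terminates after finitely many steps and yields a 1-DMMN of length at most $l(N^{opt}_k(T))$ contained in $\Gamma_k$.

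The core verification is that $N(t)$ remains a 1-DMMN for $F_k$. For an arbitrary pair $\{{\bf t}_i,{\bf t}_j\}\in F_k$, let $\pi$ be a shortest $\mathcal B$-path between them in $N^{opt}_k(T)$. By Lemma~\ref{shortest_legal_path}, $\pi$ is monotone with respect to $\ell_k$ and $\ell_{k+1}$, so its intersection with $\ell$ is either empty, a single point, or a single sub-segment; in particular $\pi$ uses at most one connected piece of the sub-network on $\ell$, entering (resp.\ leaving) it via a single $(k+1)$-segment. Translating that portion of $\pi$ by $tv$ lengthens one of these entering/leaving $(k+1)$-segments by $|t|$ and shortens the other by $|t|$, so the translated path has exactly the same length as $\pi$. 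The bound on $|t|$ that forbids any incident $(k+1)$-segment from collapsing and reversing direction is precisely what guarantees that the translated path remains monotone with respect to $\ell_k,\ell_{k+1}$; hence by Lemma~\ref{shortest_legal_path} it is again a shortest $\mathcal B$-path, now in $N(t)$.

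The main obstacle I expect is this monotonicity verification: one must ensure that shifting the sub-network on $\ell$ cannot locally ``flip'' a $(k+1)$-segment relative to some shortest path through the network, which is exactly where the collapsing-segment bound on $|t|$ enters. The length accounting itself and the termination argument are routine.
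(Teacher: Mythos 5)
Your proof is correct and is precisely the argument the paper has in mind: the paper gives no proof of Lemma~\ref{grid}, stating only that it ``can be proved in the same way as for rectilinear Steiner trees or Manhattan networks,'' i.e., by the standard Hanan-grid sliding argument, which is exactly what you carry out (and your key observation---that only $k$- and $(k+1)$-segments are involved, so the slide parallel to $\ell_{k+1}$ preserves monotonicity via Lemma~\ref{shortest_legal_path}---is the reason the argument works for 1-DMMN even though, as the paper notes, it fails for the full ${\mathcal B}$-MMN problem). The only point worth making explicit is that ``incident $(k+1)$-segment'' should mean an edge of the network subdivided at all branch and crossing points, so that the collapsing bound on $|t|$ also prevents the $\ell$-endpoint from sliding past an interior junction.
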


For two terminals ${\bf t}_i,{\bf t}_{i'},$ set $R_{i,i'}:=I({\bf t}_i,{\bf t}_{i'}).$
A pair  ${\bf t}_i,{\bf t}_{i'}$ defines a {\it $k$-strip} $R_{i,i'}$ if
either (i) (degenerated strip) ${\bf t}_i$ and ${\bf t}_{i'}$ are
consecutive terminals belonging to the same line of $L_k$ or (ii)
${\bf t}_i$ and ${\bf t}_{i'}$ belong to two consecutive lines of $L_k$
and the intersection of $R_{i,i'}$ with any degenerated $k$-strip is
either empty or one of the terminals ${\bf t}_i$ or ${\bf t}_{i'};$ see
Fig.~6 of \cite{ChNouVa}The two $k$-segments of $R_{i,i'}$ are called
the {\it sides} of $R_{i,i'}.$ The {\it $(k+1)$-strips} and their
sides are defined analogously (with respect to $L_{k+1}$). With some
abuse of language, we will call the $k$-strips {\it horizontal} and
the $(k+1)$-strips {\it vertical}.  If a pair $\{ {\bf t}_i,{\bf
t}_{i'}\}$ defining a horizontal or a vertical strip $R_{i,i'}$ belongs
to the set $F_k,$ then we say that $R_{i,i'}$ is a {\it
one-directional strip} or a {\it 1-strip}, for short. Denote by
$F'_k$ the set of all pairs of $F_k$  defining one-directional
strips.


\begin{lemma} \label{strips} If $R_{i,i'}$ and $R_{j,j'}$ are two horizontal
1-strips or two vertical 1-strips, then  $R_{i,i'}\cap R_{j,j'}=\emptyset$ if
$\{ i,i'\}\cap \{ j,j'\}=\emptyset$ and
$R_{i,i'}\cap R_{j,j'}=\{ {\bf t}_i\}$ if $\{ i,i'\}\cap \{ j,j'\}=\{ i\}.$
\end{lemma}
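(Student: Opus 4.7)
The plan is to exploit two structural properties of $1$-strips and then run a short case analysis; the argument for two horizontal $1$-strips and two vertical $1$-strips is identical up to swapping the roles of $\ell_k$ and $\ell_{k+1}$, so I focus on horizontal strips. The first structural property follows from Lemma~\ref{shortest_legal_path} applied to the defining pair in $F_k$: each non-degenerated horizontal $1$-strip $R_{i,i'}=I({\bf t}_i,{\bf t}_{i'})$ is a parallelogram whose two sides parallel to the $L_k$-lines lie on the two consecutive $L_k$-lines through ${\bf t}_i$ and ${\bf t}_{i'}$, while its other two sides are $(k+1)$-segments; in particular all horizontal $1$-strips share the same slant direction $\ell_{k+1}$. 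The second property is an immediate consequence of condition~(ii) in the definition of a $k$-strip: no terminal other than ${\bf t}_i$ or ${\bf t}_{i'}$ can sit in the interior of a side of $R_{i,i'}$, because such a terminal together with its nearest neighbour on the same $L_k$-line would form a degenerated $k$-strip whose intersection with $R_{i,i'}$ is a proper sub-segment, contradicting~(ii).

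Next I dispose of the easy cases. When both strips are degenerated, they are segments on $L_k$-lines and the claim reduces to the trivial fact about pairs of consecutive terminals on a line. When exactly one is degenerated, condition~(ii) applied to the non-degenerated strip is precisely what must be shown. When both are non-degenerated but occupy different pairs of consecutive $L_k$-lines, the two line-pairs are either line-disjoint (so the strips sit in disjoint horizontal bands and are disjoint) or they share exactly one $L_k$-line $\ell$, in which case the intersection lies on $\ell$ and reduces to an overlap-of-sides analysis identical to the one used in the main case below.

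The main case --- and the main obstacle --- is that both $1$-strips lie between the same pair of consecutive $L_k$-lines $\ell',\ell''$. The common-slant property reduces disjointness of the two parallelograms to disjointness of their $\ell'$-sides, since every horizontal cross-section of either parallelogram is a common translate of its $\ell'$-side. If the two $\ell'$-sides overlap in more than one point, then some endpoint of one side lies in the closure of the other; the second structural property forces such an endpoint, when it is a terminal, to coincide with ${\bf t}_i$ or ${\bf t}_{i'}$, giving a shared terminal and, after one more short argument using the common slant, an intersection equal to $\{{\bf t}_i\}$. The delicate sub-case is that the offending endpoint is a non-terminal corner of one strip lying inside the other; I expect to rule it out by applying condition~(ii) to the degenerated $k$-strip formed by that corner's position and the adjacent terminal on $\ell'$, which would have a proper-sub-segment intersection with the opposing strip.
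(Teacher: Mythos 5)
Your overall strategy (parallelogram structure via Lemma \ref{shortest_legal_path}, reduction to the $L_k$-parallel sides, and condition (ii) applied through degenerated strips) is the same one the paper's terse proof relies on, but the treatment of your main case has a genuine gap. When both non-degenerated $1$-strips lie in the same band between consecutive lines $\ell',\ell''$ of $L_k$, a shared terminal can never produce an intersection equal to $\{{\bf t}_i\}$: if ${\bf t}_i={\bf t}_j\in\ell'$ and ${\bf t}_{i'},{\bf t}_{j'}\in\ell''$, then membership of both pairs in $F_k$ forces ${\bf t}_{i'}$ and ${\bf t}_{j'}$ into the \emph{same} cone $C_k({\bf t}_i)$, so the two parallelograms have identical $(k+1)$-sides at ${\bf t}_i$ and their intersection contains that whole slant segment. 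Hence in the same-band case the only admissible conclusion is that the configuration is impossible (one of the two pairs violates condition (ii): whichever of ${\bf t}_{i'},{\bf t}_{j'}$ is closer to the common corner on $\ell''$ lies on the $\ell''$-side of the other strip, so a degenerated strip on $\ell''$ meets that strip in a point or segment that is not one of its defining terminals). The outcome $R_{i,i'}\cap R_{j,j'}=\{{\bf t}_i\}$ arises only in the configuration you dismiss as ``identical to the main case,'' namely when the two strips occupy the two different bands adjacent to the shared line $\ell'$; there the required argument is not an overlap-of-sides analysis but the observation that $\{i,i'\},\{j,j'\}\in F_k$ places ${\bf t}_{i'}\in C_k({\bf t}_i)$ and ${\bf t}_{j'}\in -C_k({\bf t}_i)$, whence $R_{i,i'}\cap R_{j,j'}\subseteq C_k({\bf t}_i)\cap(-C_k({\bf t}_i))=\{{\bf t}_i\}$. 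Nowhere do you invoke the $F_k$ condition in the intersection analysis, yet without it the statement is false: two $k$-strips that are not $1$-strips, sharing a terminal on $\ell'$ with their other terminals on opposite sides of $\ell'$, can have $\ell'$-sides pointing in the same direction and overlapping in a segment.

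Two smaller problems. First, your reduction to the $\ell'$-sides silently drops the case where the two sides touch in exactly one point while both strips lie in the same band; the parallelograms then still intersect in a full slant segment, so this case must also be excluded (it forces a terminal of one strip to coincide with a non-terminal corner of the other, which again contradicts condition (ii) via the degenerated strip from that terminal towards the other strip's terminal on the same line). Second, the fix you propose for your ``delicate sub-case'' is not available as stated: a degenerated $k$-strip is by definition spanned by two \emph{consecutive terminals}, so there is no degenerated strip ``formed by that corner's position'' when the corner is not a terminal. What does work is to note that the overlap of the two $\ell'$-sides is contained in the segment of $\ell'$ between the terminals ${\bf t}_i$ and ${\bf t}_j$, which is covered by a chain of degenerated strips; any one of these containing an interior point of the overlap intersects $R_{i,i'}$ or $R_{j,j'}$ in a set that is neither empty nor a single defining terminal, contradicting condition (ii).
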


\begin{proof} From the definition follows that if $R_{i,i'}$ and $R_{j,j'}$ are
both degenerated or one is
degenerated  and another one not, then they are either disjoint or
intersect in a single terminal. If $R_{i,i'}$ and $R_{j,j'}$ are
both non-degenerated and intersect, then from the definition
immediately follows that  the intersection is one point or a segment
belonging to their sides. However, if $R_{i,i'}$ and $R_{j,j'}$
intersects in a segment, then  one can easily see that at least one
of $R_{i,i'}$ and $R_{j,j'}$ cannot be a 1-strip. $\Box$
\end{proof}

We say that a vertical 1-strip $R_{i,i'}$ and a horizontal 1-strip
$R_{j,j'}$ (degenerated or not) form a {\it crossing configuration}
if they intersect (and therefore cross each other).

\begin{lemma} \label{crossing} If $R_{i,i'}$ and $R_{j,j'}$ form a
crossing configuration, then from the shortest ${\mathcal B}$-paths between
${\bf t}_i$
and ${\bf t}_{i'}$ and between ${\bf t}_j$ and ${\bf t}_{j'}$ one  can derive
shortest ${\mathcal B}$-paths
connecting ${\bf t}_i, {\bf t}_{j'}$ and ${\bf t}_{i'}, {\bf t}_j,$
respectively.
\end{lemma}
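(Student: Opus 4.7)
The plan is to work in the oblique coordinate system $(u,v)$ in which $u$ is measured along $\ell_k$ and $v$ along $\ell_{k+1}$. In these coordinates $k$-segments are horizontal, $(k+1)$-segments are vertical, and by Lemma \ref{interval} the interval between any pair in $F_k$ becomes an axis-aligned rectangle once its two endpoints are labeled so that their difference lies in $C_k$. After relabeling, I may assume
$u_i\le u_{i'}$, $v_i\le v_{i'}$, $u_j\le u_{j'}$, $v_j\le v_{j'}$.

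First I would pin down the geometry of the crossing. The vertical $1$-strip condition for $R_{i,i'}$ forces $u_j,u_{j'}\notin(u_i,u_{i'})$, while the horizontal $1$-strip condition for $R_{j,j'}$ forces $v_i,v_{i'}\notin(v_j,v_{j'})$. Combining these with the assumption $R_{i,i'}\cap R_{j,j'}\ne\emptyset$, a short case analysis (discarding degenerate overlaps on an edge or corner, which can be handled directly) leaves only the configuration
\[
u_j\le u_i\le u_{i'}\le u_{j'}, \qquad v_i\le v_j\le v_{j'}\le v_{i'},
\]
so that the crossing region $Q:=R_{i,i'}\cap R_{j,j'}$ is the sub-rectangle $[u_i,u_{i'}]\times[v_j,v_{j'}]$.

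Next I would show that every pair of shortest $\mathcal{B}$-paths $\pi_1=\pi(\mathbf{t}_i,\mathbf{t}_{i'})$ and $\pi_2=\pi(\mathbf{t}_j,\mathbf{t}_{j'})$ must meet inside $Q$. By Lemma \ref{shortest_legal_path} both paths are $(u,v)$-monotone and lie inside their respective interval rectangles. The chain $v_i\le v_j<v_{j'}\le v_{i'}$ forces the $v$-coordinate along $\pi_1$ to sweep through $[v_j,v_{j'}]$ while its $u$-coordinate stays in $[u_i,u_{i'}]$, so $\pi_1\cap Q$ is a Jordan arc joining the bottom and top edges of $Q$. Symmetrically $\pi_2\cap Q$ is a Jordan arc joining the left and right edges of $Q$, and two such arcs inside a rectangle must intersect; pick $\mathbf{z}\in\pi_1\cap\pi_2\cap Q$.

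Finally I would define the derived paths by concatenating sub-arcs,
\[
\pi(\mathbf{t}_i,\mathbf{t}_{j'}):=\pi_1[\mathbf{t}_i,\mathbf{z}]\cup\pi_2[\mathbf{z},\mathbf{t}_{j'}], \qquad \pi(\mathbf{t}_{i'},\mathbf{t}_j):=\pi_1[\mathbf{z},\mathbf{t}_{i'}]\cup\pi_2[\mathbf{t}_j,\mathbf{z}],
\]
and verify optimality. A sub-path of a shortest $\mathcal{B}$-path is itself shortest, so the total length of the first derived path is $d(\mathbf{t}_i,\mathbf{z})+d(\mathbf{z},\mathbf{t}_{j'})$; this equals $d(\mathbf{t}_i,\mathbf{t}_{j'})$ precisely when $\mathbf{z}\in I(\mathbf{t}_i,\mathbf{t}_{j'})$. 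But Lemma \ref{interval} identifies $I(\mathbf{t}_i,\mathbf{t}_{j'})$ with the rectangle $[u_i,u_{j'}]\times[v_i,v_{j'}]$, which contains $Q$ by the inequalities above, so indeed $\mathbf{z}\in I(\mathbf{t}_i,\mathbf{t}_{j'})$; a symmetric verification handles $\pi(\mathbf{t}_{i'},\mathbf{t}_j)$. The hardest step is the topological intersection: in isolation two monotone paths can be routed around one another, but the consecutive-lines constraint in the definition of $1$-strips is exactly what forces $\pi_1$ and $\pi_2$ to cross $Q$ in orthogonal directions, making the Jordan-curve argument unavoidable. Degenerate strips, where $R_{i,i'}$ or $R_{j,j'}$ collapses to a single legal segment, need only a direct check that the segment meets the other interval's monotone path at a common point.
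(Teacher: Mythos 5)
Your proof is correct. The paper states Lemma \ref{crossing} without any proof (it is the direct analogue of the corresponding fact for rectilinear strips in \cite{ChNouVa}), and your argument --- pass to the oblique coordinates of $D_k$, use the consecutive-line conditions on the two $1$-strips to force the nested configuration $u_j\le u_i\le u_{i'}\le u_{j'}$, $v_i\le v_j\le v_{j'}\le v_{i'}$, deduce from monotonicity (Lemma \ref{shortest_legal_path}) that the two shortest paths must cross inside $R_{i,i'}\cap R_{j,j'}$, and check via Lemma \ref{interval} that the crossing point lies in $I({\bf t}_i,{\bf t}_{j'})$ and $I({\bf t}_{i'},{\bf t}_j)$ so that the concatenations are again shortest --- is exactly the intended one.
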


\begin{figure}
 \begin{center}
  \includegraphics[scale=0.43]{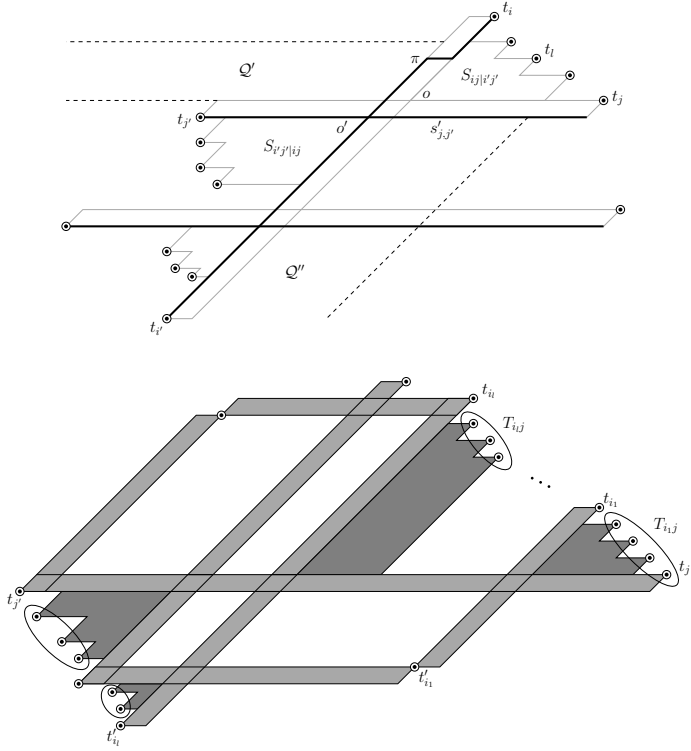}
  \caption{Strips, staircases, and completion}
\label{fig_strips_staircases_completion}
 \end{center}
\end{figure}

For a crossing configuration defined by the 1-strips
$R_{i,i'},R_{j,j'},$ denote by ${\bf o}$ and ${\bf o}'$ the two
opposite corners of the parallelogram $R_{i,i'}\cap R_{j,j'},$ such
that the cones $C_k({\bf o})$ and $-C_k({\bf o}')$ do not intersect
the interiors of $R_{i,i'}$ and $R_{j,j'}.$  Additionally, suppose
without loss of generality, that ${\bf t}_i$ and ${\bf t}_j$ belong
to the cone $C_k({\bf o}),$ while ${\bf t}_{i'}$ and ${\bf t}_{j'}$
belong to the cone $-C_k({\bf o}').$ Denote by $T_{i,j}$ the set of
all terminals  ${\bf t}_l\in (T\setminus \{ {\bf t}_i,{\bf t}_j\})
\cap C_k({\bf o})$ such that $(-C_k({\bf t}_l)) \setminus (-C_k({\bf
o}))$ does not contain any terminal except $t_l.$  Denote by ${\cal
S}_{i,j|i',j'}$ the region of $C_k({\bf o})$ which is the union of
the intervals  $I({\bf t}_l,{\bf o}), {\bf t}_l\in T_{i,j},$ and
call this polygon an {\it one-directional staircase} or a {\it
1-staircase}, for short; see Fig.
\ref{fig_strips_staircases_completion} and Figures 7,8 of
\cite{ChNouVa} for an illustration. Note that ${\cal S}_{i,j|i',j'}$
is bounded by the 1-strips $R_{i,i'}$ and $R_{j,j'}$ and a legal
path between ${\bf t}_i$ and ${\bf t}_j$ passing via all terminals
of $T_{i,j}$ and consisting of $k$-segments and $(k+1)$-segments.
The point $o$ is called the {\it origin}  and $R_{i,i'}$ and
$R_{j,j'}$ are called the {\it basis} of this staircase. Since
$I({\bf t}_l,{\bf o})\subset (-C_k({\bf t}_l))\setminus (-C_k({\bf
o}))$ for all ${\bf t}_l\in T_{i,j},$ $I({\bf t}_l,{\bf o})\cap T=\{
{\bf t}_l\}$ and therefore ${\cal S}_{i,j|i',j'}\cap T=T_{i,j}.$ For
the same reason, there are no terminals of $T$ located in the
regions ${\mathcal Q}'$ and ${\mathcal Q}''$ depicted in Fig.
\ref{fig_strips_staircases_completion} (${\mathcal Q}'$ is the
region comprised between the leftmost side of $R_{i,i'},$ the
highest side of   $R_{j,j'},$ and the line of $L_k$ passing via the
highest terminal of $T_{i,j},$ while ${\mathcal Q}''$ is the region
comprised between the rightmost side of $R_{i,i'},$ the lowest side
of   $R_{j,j'},$ and the line of $L_{k+1}$  passing via the
rightmost terminal of $T_{i,j}$). Analogously one can define the set
$T_{i',j'}$ and the staircase ${\cal S}_{i',j'|i,j}$ with origin
${\bf o}'$ and basis $R_{i,i'}$ and $R_{j,j'}.$

\begin{lemma} \label{strip_staircase} If a 1-strip $R_{l,l'}$
intersects a 1-staircase ${\cal
S}_{i',j'|i,j}$ and  $R_{l,l'}$ is different from the 1-strips
$R_{i,i'}$ and $R_{j,j'},$ then  $R_{l,l'}\cap {\cal
S}_{i',j'|i,j}$ is a single terminal.
\end{lemma}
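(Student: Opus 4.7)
The plan is to argue by contradiction. Suppose $R_{l,l'}$ is a 1-strip distinct from $R_{i,i'}$ and $R_{j,j'}$, and its intersection with ${\cal S}_{i',j'|i,j}$ is either a nontrivial segment/region, contains two distinct terminals, or consists of a single non-terminal point. Since the hypothesis asserts that the intersection is nonempty, these are the only alternatives to the desired conclusion. Without loss of generality I take $R_{l,l'}$ to be a horizontal $k$-strip whose two sides lie on lines $h$ and $h'$ of $L_k$ (consecutive in $L_k$ if $R_{l,l'}$ is non-degenerate, equal otherwise), with ${\bf t}_l\in h$ and ${\bf t}_{l'}\in h'$; the $(k+1)$-strip case is symmetric.

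I would rely on the two key facts from the paragraph preceding the lemma. First, $T\cap\overline{{\cal S}_{i',j'|i,j}}=T_{i',j'}\cup\{{\bf t}_{i'},{\bf t}_{j'}\}$, so any terminal meeting the staircase is one of these. Second, the flanking regions ${\cal Q}'$ and ${\cal Q}''$ contain no terminals of $T$. The main case is that both ${\bf t}_l$ and ${\bf t}_{l'}$ lie in $T_{i',j'}$. The cone-minimality property defining $T_{i',j'}$ (the symmetric counterpart of the condition defining $T_{i,j}$) implies that neither stair terminal lies in the $-C_k$-cone apex'd at the other. By Lemma \ref{interval}, the interval $I({\bf t}_l,{\bf t}_{l'})$ then equals $C_{k'}({\bf t}_l)\cap(-C_{k'}({\bf t}_{l'}))$ for some $k'\neq k$, and by Lemma \ref{shortest_legal_path} every shortest ${\mathcal B}$-path between ${\bf t}_l$ and ${\bf t}_{l'}$ uses only $k'$- and $(k'+1)$-segments. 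Hence $\{l,l'\}\notin F_k$, contradicting $R_{l,l'}$ being a 1-strip for direction $D_k$.

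The subcases in which one of ${\bf t}_l,{\bf t}_{l'}$ equals ${\bf t}_{i'}$ or ${\bf t}_{j'}$ are treated by the same cone-comparison, combined with the observation that ${\bf t}_{i'}\in R_{i,i'}$ and ${\bf t}_{j'}\in R_{j,j'}$: such a 1-strip $R_{l,l'}$ either coincides with one of the basis strips (excluded by hypothesis) or has a pair not in $F_k$, again contradicting the 1-strip assumption. The remaining possibility is that at least one terminal of $R_{l,l'}$ lies outside $\overline{{\cal S}_{i',j'|i,j}}$; then the strip crosses the staircase boundary. Since $h$ and $h'$ are consecutive in $L_k$, no terminal of $T$ has $y$-coordinate strictly between them. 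Combined with the monotonicity of the staircase path (which alternates $k$- and $(k+1)$-segments at stair terminals) and the terminal-freeness of ${\cal Q}'\cup{\cal Q}''$, the strip can only "graze" the staircase at a single stair terminal lying on $h$ or $h'$; any further overlap would either produce a stair terminal strictly between $h$ and $h'$ or push ${\bf t}_{l'}$ into ${\cal Q}'\cup{\cal Q}''$.

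The hardest step will be this last case. Ruling out a 2-dimensional overlap of $R_{l,l'}$ with ${\cal S}_{i',j'|i,j}$ when neither of the strip's terminals is inside requires a careful step-by-step traversal of the staircase path, splitting on whether $h$ or $h'$ itself passes through a stair terminal, and in each subsituation invoking the consecutive-line property of $L_k$ together with the cone-minimality of $T_{i',j'}$ to force the intersection to collapse to a single boundary point coinciding with a terminal.
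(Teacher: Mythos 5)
Your case analysis disposes cleanly of the situation where both defining terminals of $R_{l,l'}$ are stair terminals (the cone-minimality argument showing $\{l,l'\}\notin F_k$ is correct, and is in fact a step the paper does not even need, since it never cases on where ${\bf t}_l,{\bf t}_{l'}$ lie), but the lemma's real content sits precisely in the case you defer: a 1-strip whose terminals are not both inside the staircase yet which meets the staircase in a segment or a two-dimensional region. For that case you only announce a plan (``requires a careful step-by-step traversal of the staircase path''), list the ingredients you expect to use, and assert that ``any further overlap would either produce a stair terminal strictly between $h$ and $h'$ or push ${\bf t}_{l'}$ into ${\cal Q}'\cup{\cal Q}''$'' --- but that dichotomy is exactly what has to be proved, and no argument is given for why an overlap must create one of these two configurations. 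The intermediate subcase where one terminal of $R_{l,l'}$ equals ${\bf t}_{i'}$ or ${\bf t}_{j'}$ is likewise only asserted. As it stands the proposal is an outline with the decisive step missing.

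For comparison, the paper closes the hard case in two short moves, casing on the shape of the intersection rather than on the location of ${\bf t}_l,{\bf t}_{l'}$. First, $R_{l,l'}$ cannot traverse the staircase, since a traversal would force one of ${\bf t}_l,{\bf t}_{l'}$ into ${\cal Q}'$ or ${\cal Q}''$, which are terminal-free; hence a non-singleton intersection is a segment $s$ lying on a side of $R_{l,l'}$ and on the boundary of the staircase, and for a horizontal $R_{l,l'}$ this segment must lie on the lowest side of $R_{l,l'}$ and on the highest horizontal edge of the stair path. Second, the extreme stair terminal ${\bf t}$ carried by that edge then either belongs to $R_{l,l'}$ while being distinct from ${\bf t}_l,{\bf t}_{l'}$, which is impossible by the definition of a strip, or it forms a degenerate strip with the lower terminal of $R_{l,l'}$, which places that terminal in ${\cal Q}'$, again impossible. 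It is this localization of the overlap at an extreme stair terminal that your sketch lacks; without it the collapse of the intersection to a single terminal is not established.
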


\begin{proof} If a 1-strip  $R_{l,l'}$ traverses a
staircase ${\cal S}_{i',j'|i,j}$, then  one of the terminals ${\bf
t}_l,{\bf t}_{l'}$  must be located in one of the regions ${\mathcal
Q}'$ and ${\mathcal Q}'',$ which is impossible because $({\mathcal
Q}'\cup {\mathcal Q}'')\cap T=\emptyset.$ Thus, if $R_{l,l'}$ and
${\cal S}_{i',j'|i,j}$ intersect more than in one point, then they
intersect in a segment $s$ which belongs to one side of $R_{l,l'}$
and to the boundary of ${\cal S}_{i',j'|i,j}.$ If say the 1-strip
$R_{l,l'}$ is horizontal, then necessarily  $s$ is a part of the
lowest side of $R_{l,l'}$ and of the highest horizontal side of
${\cal S}_{i',j'|i,j}.$ Let ${\bf t}$ be the highest terminal of
$T_{i,j}.$ Then either ${\bf t}$ belongs to $R_{l,l'}$ and is
different from ${\bf t}_l,{\bf t}_{l'},$ contrary to the assumption
that $R_{l,l'}$ is a strip, or ${\bf t}$ together with the lowest
terminal ${\bf t}_{l'}$ of $R_{l,l'}$ define a degenerated strip
with ${\bf t}_{l'}$ belonging to ${\mathcal Q}',$ contrary to the
assumption that ${\mathcal Q}'\cap T=\emptyset$. $\Box$
\end{proof}

\begin{lemma} \label{staircase_staircase} Two 1-staircases either are
disjoint or intersect only in common terminals.
\end{lemma}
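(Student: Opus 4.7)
The plan is to argue by contradiction: suppose two $1$-staircases $\mathcal{S}:=\mathcal{S}_{i,j|i',j'}$ and $\mathcal{S}':=\mathcal{S}_{a,b|a',b'}$ meet at a point $\bf x$ that is not a common terminal. Each $1$-staircase is bounded by its two basis $1$-strips together with a monotone staircase polyline through the terminals of $T_{i,j}$ (respectively $T_{a,b}$). The key tool is Lemma~\ref{strip_staircase}: any $1$-strip that is not one of the two basis strips of $\mathcal{S}$ meets $\mathcal{S}$ in at most a single terminal, and symmetrically for $\mathcal{S}'$. Since $\bf x$ is neither a terminal nor on the basis strips in the ``terminal only'' way, the portion of $\partial\mathcal{S}'$ entering the interior of $\mathcal{S}$ has to include a sub-arc of the staircase polyline of $\mathcal{S}'$ (or vice versa).

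First I would dispose of the case where $\mathcal{S}$ and $\mathcal{S}'$ share a basis strip, say $R_{i,i'}=R_{a,a'}$. Such a strip separates the plane into two half-planes; the origins $\bf o$ and $\bf o'$ together with the opposite basis strips $R_{j,j'}$ and $R_{b,b'}$ must then lie in a configuration where either both staircases sit on opposite sides of $R_{i,i'}$ (hence meet only along this shared strip, and by Lemma~\ref{strips} this intersection reduces to the terminals $\{{\bf t}_i,{\bf t}_{i'}\}$ they have in common), or the staircases sit on the same side. In the latter subcase I compare the positions of $R_{j,j'}$ and $R_{b,b'}$ using Lemma~\ref{strip_staircase} (each meets the other staircase in at most a terminal) to force the staircases to be nested along the shared strip, and then the monotonicity of the staircase polylines forces the overlap to occur exactly at the terminals of $T_{i,j}\cap T_{a,b}$.

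The main case is when the four basis strips are pairwise distinct. Then each basis strip of $\mathcal{S}'$ meets $\mathcal{S}$ in at most one terminal by Lemma~\ref{strip_staircase}, and symmetrically, so any two-dimensional overlap of $\mathcal{S}$ and $\mathcal{S}'$ has to be bounded by arcs of the two staircase polylines. Take a segment $e$ of the staircase polyline of $\mathcal{S}'$ that lies in the interior of $\mathcal{S}$; its endpoints are consecutive terminals of $T_{a,b}$ that are interior to $\mathcal{S}$, hence belong to $T_{i,j}$ as well (by the defining property ${\cal S}\cap T=T_{i,j}$ recalled in the construction). But then the endpoints of $e$ are both extremal in $C_k({\bf o}')$ and extremal in $C_k({\bf o})$ in the sense defining $T_{a,b}$ and $T_{i,j}$, and comparing the two ``extremality'' conditions forces the ``missing'' terminals ${\bf t}_a, {\bf t}_b$ or ${\bf t}_{a'},{\bf t}_{b'}$ into one of the forbidden parallelogram regions ${\cal Q}'$ or ${\cal Q}''$ attached to $\mathcal{S}$ or $\mathcal{S}'$, regions which are known to contain no terminals. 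This is the desired contradiction.

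The main obstacle is the geometric case analysis around the orientations of the two staircases: each has an apex cone pointing into one of the two opposite directions $C_k$ or $-C_k$, so up to symmetry there are several relative configurations of $({\bf o},{\bf o}')$ and of the basis pairs to handle, and one must track in each subcase exactly which forbidden region ${\cal Q}'$ or ${\cal Q}''$ is violated. The monotonicity of the staircase polylines (inherited from Lemma~\ref{shortest_legal_path}) together with the emptiness of these regions — both already established when staircases were introduced — is what makes all subcases collapse to the same contradiction.
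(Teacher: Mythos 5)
Your overall skeleton---push the intersection to the boundaries via Lemma~\ref{strip_staircase}, then rule out any non-terminal boundary overlap---has the same shape as the paper's argument: the paper first asserts that the \emph{interiors} of two staircases are disjoint (citing \cite{ChNouVa} for the formal proof of this), and then notes that a boundary intersection is either a set of common terminals or a single edge, the latter contradicting the $1$-staircase property. The difficulty is that the one genuinely hard step, interior-disjointness, is precisely what your ``main case'' is supposed to establish, and as written it does not go through. First, a two-dimensional overlap of ${\mathcal S}$ and ${\mathcal S}'$ need not contain a full edge of either staircase polyline: the portion of $\partial{\mathcal S}'$ lying in the interior of ${\mathcal S}$ can be a proper sub-segment of a single polyline edge (the two polylines crossing transversally and cutting out a small parallelogram), in which case there are no ``consecutive terminals of $T_{a,b}$ interior to ${\mathcal S}$'' and your argument has nothing to act on. Second, the configuration you do analyze is internally inconsistent: a terminal of $T_{a,b}$ lying in ${\mathcal S}$ must belong to $T_{i,j}$ by ${\mathcal S}\cap T=T_{i,j}$, and the terminals of $T_{i,j}$ lie on the boundary polyline of ${\mathcal S}$, not in its interior, so the case of a polyline edge with both terminal endpoints inside ${\mathcal S}$ cannot occur and the actual crossing configurations are left untreated. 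Third, the decisive claim that ``comparing the two extremality conditions forces ${\bf t}_a,{\bf t}_b$ or ${\bf t}_{a'},{\bf t}_{b'}$ into ${\mathcal Q}'$ or ${\mathcal Q}''$'' is only announced, and you concede that the case analysis over the relative orientations of the two apex cones is not carried out.

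The shared-basis case has the same character: ``comparing positions to force nesting'' and ``monotonicity forces the overlap to occur at common terminals'' are statements of what must be shown, not arguments. So the proposal correctly identifies the ingredients (Lemma~\ref{strip_staircase}, the empty regions ${\mathcal Q}',{\mathcal Q}''$, monotonicity) but leaves the core of the lemma---that two $1$-staircases cannot overlap in a region or in a non-degenerate segment off the terminals---unproved; either complete that geometric case analysis explicitly or import the disjointness of interiors from \cite{ChNouVa} as the paper does.
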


\begin{proof} From the definition of a staircase
follows that the interiors of two staircases are disjoint (for a
short formal proof of this see \cite{ChNouVa}). Therefore two
staircases may intersect only on the boundary.  In this case, the
intersection is either a subset of terminals of both staircases or a
single edge. In the second case, one of the two staircases
necessarily is not a 1-staircase with respect to the chosen
direction. $\Box$
\end{proof}

Let $F''_k$ be the set of all pairs $\{{\bf t}_{j'},{\bf t}_l\}$
such that there exists a 1-staircase ${\cal S}_{i,j|i',j'}$ with
${\bf t}_l$ belonging to the set $T_{i,j}.$ The proof of the
following essential result is identical to the proof of Lemma 3.2 of
\cite{ChNouVa} and therefore is omitted.

\begin{lemma} \label{generating_set} $F:=F'_k\cup F''_k$ is a generating set for $F_k$.
\end{lemma}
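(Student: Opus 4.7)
The plan is to proceed by induction on the number of terminals of $T$ contained in the interval $R_{a,b}:=I({\bf t}_a,{\bf t}_b)$ for an arbitrary pair $\{{\bf t}_a,{\bf t}_b\}\in F_k$, in order to show that any network $N$ containing shortest ${\mathcal B}$-paths for every pair in $F=F'_k\cup F''_k$ also contains a shortest ${\mathcal B}$-path between ${\bf t}_a$ and ${\bf t}_b$.

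In the base case, $\{{\bf t}_a,{\bf t}_b\}$ itself defines a 1-strip, so it belongs to $F'_k\subseteq F$ and the conclusion is immediate. In the inductive step, $\{{\bf t}_a,{\bf t}_b\}$ is not a 1-strip, so $R_{a,b}$ contains at least one additional terminal ${\bf t}_c$ of $T$ in a position that obstructs the strip condition. Using ${\bf t}_c$ together with the cone description of Lemma~\ref{interval}, I would exhibit either (i) a pair of perpendicular 1-strips forming a crossing configuration that covers $\{{\bf t}_a,{\bf t}_b\}$ as one of its diagonal pairs, in which case Lemma~\ref{crossing} combined with the fact that both strip pairs lie in $F'_k\subseteq F$ finishes the case; or (ii) a 1-staircase ${\cal S}_{i,j|i',j'}$ whose corner terminal ${\bf t}_{j'}$ coincides with, say, ${\bf t}_a$, in which case $\{{\bf t}_a,{\bf t}_c\}\in F''_k\subseteq F$ for ${\bf t}_c\in T_{i,j}$ while the remaining segment from ${\bf t}_c$ to ${\bf t}_b$ is obtained from the inductive hypothesis (since $I({\bf t}_c,{\bf t}_b)$ is strictly contained in $R_{a,b}$ and misses at least ${\bf t}_a$). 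In either case, monotonically concatenating the pieces and invoking Lemma~\ref{shortest_legal_path} certifies the result as a shortest ${\mathcal B}$-path in $N$.

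The main obstacle I expect lies in the geometric case analysis of the inductive step: showing that for every non-strip pair $\{{\bf t}_a,{\bf t}_b\}\in F_k$ one can always produce either the required crossing configuration or the required 1-staircase, with the participating basis strips actually being 1-strips (so that their defining pairs genuinely belong to $F'_k$). The emptiness of the regions ${\mathcal Q}'$ and ${\mathcal Q}''$ noted in the paragraph preceding Lemma~\ref{strip_staircase} should be the decisive geometric fact, simultaneously enabling the existence of the 1-staircase and the monotonicity of the assembled path needed to apply Lemma~\ref{shortest_legal_path}.
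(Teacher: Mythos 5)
The paper itself omits this proof, deferring to Lemma~3.2 of \cite{ChNouVa}; that argument is indeed an induction on the number of terminals in $I({\bf t}_a,{\bf t}_b)$, so your skeleton is the right one. But your case split rests on a false dichotomy, and it makes you miss precisely the case the lemma is really about. You claim that if $\{{\bf t}_a,{\bf t}_b\}$ is not a 1-strip then $R_{a,b}=I({\bf t}_a,{\bf t}_b)$ contains an additional terminal. This is not true: a pair fails to be a strip as soon as ${\bf t}_a$ and ${\bf t}_b$ do not lie on the same or consecutive lines of $L_k$ (resp.\ $L_{k+1}$), or as soon as some degenerated strip crosses $R_{a,b}$, and both obstructions are witnessed by terminals lying \emph{outside} $I({\bf t}_a,{\bf t}_b)$. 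Pairs with $I({\bf t}_a,{\bf t}_b)\cap T=\{{\bf t}_a,{\bf t}_b\}$ that are not strips are exactly the pairs for which crossing configurations and staircases were introduced, and your induction never reaches them: they land in your ``base case'' without being strips. This is the genuine gap, and it is the part of the argument that requires the geometric work (locating the relevant crossing configuration $R_{i,i'},R_{j,j'}$ with ${\bf t}_a={\bf t}_{j'}$ and ${\bf t}_b\in T_{i,j}$, or deriving the path via Lemma~\ref{crossing}, using the emptiness of ${\mathcal Q}'$ and ${\mathcal Q}''$). You correctly identify these tools as ``the main obstacle,'' but you deploy them in the wrong branch and leave them unproved.

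A secondary point: in the branch where an extra terminal ${\bf t}_c\in I({\bf t}_a,{\bf t}_b)\setminus\{{\bf t}_a,{\bf t}_b\}$ does exist, no staircase or crossing configuration is needed at all. Since ${\bf t}_c$ lies in the interval, $d({\bf t}_a,{\bf t}_b)=d({\bf t}_a,{\bf t}_c)+d({\bf t}_c,{\bf t}_b)$, both sub-intervals are properly contained in $I({\bf t}_a,{\bf t}_b)$ and contain strictly fewer terminals, both sub-pairs remain in $F_k$, and concatenating the two shortest paths supplied by the inductive hypothesis yields a legal path of length $d({\bf t}_a,{\bf t}_b)$, hence a shortest ${\mathcal B}$-path --- no appeal to monotonicity or to Lemma~\ref{shortest_legal_path} is required. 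So the correct structure is: (a) intermediate terminal present $\Rightarrow$ split and recurse; (b) no intermediate terminal and the pair is a strip $\Rightarrow$ it is in $F'_k$; (c) no intermediate terminal and the pair is not a strip $\Rightarrow$ the staircase/crossing-configuration analysis. Your proposal as written has no case (c).
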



\section{The algorithm}

We continue with the description of our factor 1.25 approximation
algorithm for  1-DMMN problem. Let $F^h_k$ and $F^v_k$ denote the
pairs of  $F'_k$ defining horizontal and vertical 1-strips,
respectively. Let $S_1^h$ and $S^h_2$ be the networks consisting of
lower sides and respectively upper sides of the horizontal 1-strips
of $F^h_k.$  Analogously, let  $S_1^v$ and $S^v_2$ be the networks
consisting of rightmost sides and respectively leftmost sides of the
vertical 1-strips of $F^v_k.$ The algorithm completes optimally each
of the networks $S_1^h,S_2^h,S_1^v,$ and $S_2^v,$ and from the set
of four completions $N_1^h,N_2^h,N_1^v,N^v_2,$  the algorithm
returns the shortest one, which we will denote by $N_k(T)$ (in this
respect, our algorithm has some similarity with the approach of
Benkert et al. \cite{BeWoWiSh}). We will describe now the optimal
completion $N^h_1$ for the network $S^h_1,$ the three other networks
are completed in the same way (up to symmetry).

An {\it optimal completion} of $S^h_1$ is a subnetwork $N^h_1$ of $\Gamma_k$
extending $S^h_1$ ($S^h_1\subseteq N^h_1$) of smallest total length such that
any pair of terminals of $F$ can be connected in  $N^h_1$ by a shortest path.
By Lemma \ref{generating_set}, to solve the completion problem for $S^h_1$, it
suffices to (i)  select a shortest path $\pi({\bf t}_i,{\bf t}_{i'})$ of
$\Gamma_k$ between each pair ${\bf t}_i,{\bf t}_{i'}$ defining a vertical
1-strip $R_{i,i'}$, (ii) for each horizontal 1-strip $R_{j,j'}$  find a shortest
path $\pi({\bf t}_j,{\bf t}_{j'})$ between ${\bf t}_j$ and ${\bf t}_{j'}$
subject to the condition that the lowest side $s'_{j,j'}$ of $R_{j,j'}$ is
already available,   (iii) for each staircase $S_{i,j|i',j'}$ whose sides are
$R_{i,i'}$ and $R_{j,j'}$ select shortest paths from the terminals of $T_{i,j}$
to the terminal ${\bf t}_{j'}$ subject to the condition that the lowest side
$s'_{j,j'}$ of $R_{j,j'}$ is already available. We need to minimize the total
length of the resulting network $N_1^h$ over all vertical 1-strips and all
1-staircases. To solve the issue (ii) for a horizontal 1-strip $R_{j,j'},$ we
consider the rightmost 1-staircase $S_{i,j|i',j'}$ having $R_{j,j'}$ as a basis,
 set $T_{i,j}:=T_{i,j}\cup \{ {\bf t}_{j}\},$ and solve for this staircase the
issue (iii) for  the extended set of terminals.  For all other 1-staircases
$S_{i,j|i',j'}$ and $S_{i',j'|i,j}$ having $R_{j,j'}$ as a basis, we will solve
only the issue (iii) for $T_{i,j}$ and $T_{i',j'},$ respectively.

To deal with (iii), for each vertical 1-strip $R_{i,i'},$ we pick
each shortest path $\pi$ of $\Gamma_k$  between ${\bf t}_i$ and
${\bf t}_{i'},$ include it in the current completion, and solve
(iii) for  all 1-staircases having $R_{i,i'}$ as a vertical base and
taking into account that $\pi$ is already present. We have to
connect the terminals of $T_{i,j}$ by shortest paths of $\Gamma_k$
of least total length to the terminal ${\bf t}_{j'}$ subject to the
condition that the union $\pi\cup s'_{j,j'}$ is already available;
see Fig. \ref{fig_strips_staircases_completion}. For a fixed path
$\pi,$ this task can be done by dynamic programming in
$O(|T_{i,j}|^3)$ time. For this, notice that in an optimal solution
(a) either the highest terminal of $T_{i,j}$ is connected by a
vertical segment to $s'_{j,j'},$ or (b) the lowest terminal of
$T_{i,j}$  is connected by a horizontal segment to $\pi$, or (c)
$T_{i,j}$ contains two consecutive (in the staircase) terminals
${\bf t}_l, {\bf t}_{l+1},$ such that  ${\bf t}_l$ is connected to
$\pi$ by a horizontal segment and ${\bf t}_{l+1}$ is connected to
$s'_{j,j'}$ by a vertical segment.  In each of the three cases and
subsequent recursive calls, we are lead to solve subproblems of the
following type: given a set $T'$ of consecutive terminals of
$T_{i,j},$ the path $\pi$ and a horizontal segment $s',$ connect to
${\bf t}_{j'}$ the terminals of $T'$ by shortest paths of least
total length if the union $\pi\cup s'$ is available. We define by
$C^{\pi}_{i,i'}$ the optimal completion obtained by solving by
dynamic programming those problems for all staircases having
$R_{i,i'}$ as a vertical basis (note that $\pi\subseteq
C^{\pi}_{i,i'}$ however $S^1_h\cap C^{\pi}_{i,i'}=\emptyset$).  For
each vertical 1-strip $R_{i,i'},$ the completion algorithm returns
the partial completion $C^{opt}_{i,i'}$ of least total length, i.e,
$C^{opt}_{i,i'}$  is the smallest completion of the form
$C^{\pi}_{i,i'}$ taken over all $O(n)$ shortest paths $\pi$ running
between ${\bf t}_i$ and ${\bf t}_{i'}$ in $\Gamma_k.$ Finally, let
$N_1^h$ be the union of all $C^{opt}_{i,i'}$ over all vertical
1-strips $R_{i,i'}$ and $S^h_1.$ The pseudocode of the completion
algorithm is presented below (the total complexity of this algorithm
is $O(n^3)$).

\begin{algorithm}
\caption{{\sf Optimal completion$(S_1^h)$}}
\begin{algorithmic}[1]
\STATE $N_1^h \leftarrow S_1^h$ \FOR {each vertical 1-strip
$R_{i,i'}$} \FOR {each shortest path $\pi$ of $\Gamma_k$ connecting
the terminals ${\bf t}_i$ and ${\bf t}_{i'}$} \STATE compute the
partial completion $C^{\pi}_{i,i'}$ in the following way: \STATE
$C^{\pi}_{i,i'} \leftarrow \pi \setminus S_1^h$ \FOR {each
1-staircase $S_{i,j|i',j'}$ and each 1-staircase $S_{i',j'|i,j}$}
\STATE {\bf if} $S_{i,j|i',j'}$ is the rightmost staircase having
$R_{j,j'}$ as a base, {\bf then} set $T_{i,j}\leftarrow T_{i,j}\cup
\{ {\bf t}_j\}$
\STATE compute by dynamic programming the subset $C$
of edges of $\Gamma_k$ of least total length such that $C \cup (\pi
\cup s'_{j,j'})$ contains a shortest path of $\Gamma_k$ from each
terminal of $T_{i,j}$ to ${\bf t}_{j'}$ or from each terminal of
$T_{i',j'}$ to ${\bf t}_{j}$ \STATE $C^{\pi}_{i,i'} \leftarrow
C^{\pi}_{i,i'} \cup C$ \ENDFOR \ENDFOR \STATE let $C^{opt}_{i,i'}$
be the partial completion of least total length, i.e,
$C^{opt}_{i,i'}$  is the smallest completion $C^{\pi}_{i,i'}$ over
all shortest paths $\pi$ between  ${\bf t}_i$ and ${\bf t}_{i'}$
\STATE $N_1^h \leftarrow N_1^h \cup C^{opt}_{i,i'}$ \ENDFOR \STATE
return $N_1^h$
\end{algorithmic}
\end{algorithm}

\begin{lemma} \label{completion} The network $N^h_1$ returned by the algorithm {\sf Optimal completion} is
an optimal completion  for $S_1^h.$
\end{lemma}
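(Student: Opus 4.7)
The plan is to verify two things: that $N^h_1$ contains a shortest $\mathcal B$-path between every pair of terminals of the generating set $F = F'_k \cup F''_k$ (feasibility), and that no completion of $S^h_1$ inside $\Gamma_k$ has smaller total length (optimality). By Lemma \ref{generating_set} and Lemma \ref{grid} this is enough.

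Feasibility is essentially built into the construction. For each vertical 1-strip $R_{i,i'} \in F^v_k$, the selected path $\pi$ is a shortest $(t_i,t_{i'})$-path. For pairs $\{t_{j'},t_l\}$ with $t_l \in T_{i,j}$ attached to some 1-staircase $S_{i,j|i',j'}$, the edge set $C$ computed by the dynamic program together with $\pi \cup s'_{j,j'}$ contains the required shortest path by design. The horizontal 1-strip pair $\{t_j,t_{j'}\}$ is handled by the rightmost-staircase trick: inserting $t_j$ into $T_{i,j}$ forces the dynamic program to connect $t_j$ to $s'_{j,j'}$ by a legal segment which, combined with the already available portion of $s'_{j,j'}$, furnishes a shortest $(t_j,t_{j'})$-path inside $R_{j,j'}$.

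For optimality, let $N^* \subseteq \Gamma_k$ be any completion of $S^h_1$. The idea is to partition the edges of $N^* \setminus S^h_1$ into pieces $P_{i,i'}$ indexed by the vertical 1-strips $R_{i,i'}$, where $P_{i,i'}$ collects a shortest $(t_i,t_{i'})$-path $\pi^* \subseteq N^*$ (which exists since $\{i,i'\} \in F'_k$) together with the edges of $N^*$ lying inside the 1-staircases whose vertical basis is $R_{i,i'}$ and used to connect the terminals of $T_{i,j}$ or $T_{i',j'}$ to $t_{j'}$ or $t_j$. Lemmas \ref{strips}, \ref{staircase_staircase}, and \ref{strip_staircase} guarantee that distinct vertical 1-strips, distinct 1-staircases, and non-basis 1-strip/1-staircase pairs meet only in common terminals, so the pieces overlap only on terminals and $\sum_{R_{i,i'}} l(P_{i,i'}) \le l(N^* \setminus S^h_1)$. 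For each $R_{i,i'}$, the algorithm enumerates enough candidate paths $\pi$ in $\Gamma_k$ to include $\pi^*$, and then the dynamic program computes, for each attached staircase, the cheapest edge set given $\pi^*$ and $s'_{j,j'}$; its correctness follows from the standard case split (a)--(c) recorded in the algorithm, with Lemma \ref{staircase_staircase} supplying the independence of the staircase subproblems. Hence $l(C^{opt}_{i,i'}) \le l(C^{\pi^*}_{i,i'}) \le l(P_{i,i'})$, and summing over all vertical 1-strips yields $l(N^h_1) \le l(N^*)$.

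The main obstacle is to execute the partition of $N^* \setminus S^h_1$ without losing length: one must argue that in an optimal $N^*$ every edge can be assigned either to the closure of a unique vertical 1-strip or to the interior of a unique 1-staircase, so that no edge is left unaccounted for, and that the rightmost-staircase convention neither omits nor duplicates the segment that connects $t_j$ to $s'_{j,j'}$ for a horizontal 1-strip $R_{j,j'}$. Once this geometric bookkeeping is settled, the feasibility and optimality halves combine to give Lemma \ref{completion}.
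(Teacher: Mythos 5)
Your overall architecture matches the paper's proof: fix a candidate shortest path $\pi$ for each vertical 1-strip, invoke the disjointness statements (Lemmas \ref{strips}, \ref{strip_staircase}, \ref{staircase_staircase}) to make the staircase and strip subproblems independent, rely on the exhaustiveness of the dynamic program for each staircase, and combine the per-strip optima. Your feasibility half and the inequality $l(C^{opt}_{i,i'}) \le l(C^{\pi^*}_{i,i'}) \le l(P_{i,i'})$ are exactly the intended argument. One small simplification you could make: you do not need ``every edge of $N^*\setminus S^h_1$ to be accounted for''; since you only need $\sum_{R_{i,i'}} l(P_{i,i'}) \le l(N^*\setminus S^h_1)$, it suffices that the pieces $P_{i,i'}$ pairwise overlap only in terminals, which is precisely what the three disjointness lemmas give. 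Unassigned edges of $N^*$ only help the inequality.

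The genuine gap is the step you yourself defer to ``geometric bookkeeping'': justifying the rightmost-staircase convention. The algorithm satisfies condition (ii) for a horizontal 1-strip $R_{j,j'}$ only by adding ${\bf t}_j$ to the terminal set of the rightmost staircase $S_{i,j|i',j'}$ based on $R_{j,j'}$, so your optimality comparison must show that restricting to completions in which the shortest $({\bf t}_j,{\bf t}_{j'})$-path makes its vertical switch inside $R_{i,i'}\cap R_{j,j'}$ loses nothing. This is not automatic: an arbitrary completion $N^*$ may route its $({\bf t}_j,{\bf t}_{j'})$-path with a switch far from that parallelogram, and then the edge set your partition assigns to the rightmost staircase need not dominate what the dynamic program is charged for. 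The paper closes this with an exchange argument: in any completion, the shortest $({\bf t}_j,{\bf t}_{j'})$-path either switches vertically before reaching the origin ${\bf o}$ of $S_{i,j|i',j'}$ or it crosses the vertical basis $R_{i,i'}$; since the completion also contains a shortest path between the terminals of $R_{i,i'}$, splicing the two produces, at no extra cost, a shortest $({\bf t}_j,{\bf t}_{j'})$-path turning in $R_{i,i'}\cap R_{j,j'}$ and hence passing through ${\bf o}'$. Without this rerouting step your optimality half does not go through, so you should supply it rather than assume it.
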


\begin{proof} We described above how to compute for each 1-staircase $S_{i,j|i',j'}$
a subset $C$ of edges of $\Gamma_k$ of minimum
total length such that $C \cup (\pi \cup s'_{j,j'})$ contains a
shortest path of $\Gamma_k$ from each terminal of $T_{i,j}$ to ${\bf
t}_{j'}.$ This standard dynamical programming approach explores all
possible solutions and therefore achieves optimality for this
problem. Next, we assert that, for each vertical 1-strip $R_{i,i'},$
the subset of edges $C^{opt}_{i,i'}$ computed by our algorithm, is
an optimal completion of $S_1^h$ for the strip $R_{i,i'}$ and the
staircases having $R_{i,i'}$ as vertical bases. Indeed, our
algorithm considers every possible shortest path $\pi$ of $\Gamma_k$
between ${\bf t}_{i}$ and ${\bf t}_{i'}.$ Once the path $\pi$ is
fixed, the subproblems related to distinct staircases become
independent and can be solved optimally by dynamic programming. The
problems arising from distinct vertical 1-strips are also disjoint
and independent (according to Lemmas \ref{strip_staircase} and
\ref{staircase_staircase}). Therefore the solution $N_1^h$ obtained by
combining the optimal solutions $C^{opt}_{i,i'}$ of every vertical
1-strip $R_{i,i'}$ is an optimal completion of $S_1^h.$

It remains to show that to obtain a  completion satisfying the conditions
(i),(ii), and (iii),  it suffices for each horizontal 1-strip $R_{j,j'}$ to add
${\bf t}_j$ to the set $T_{i,j}$ of terminals of the rightmost staircase
$S_{i,j|i',j'}$ having $R_{j,j'}$ as a basis and to solve (iii) for this extended set of terminals.
 Indeed, in any completion any shortest path between ${\bf t}_j$ and ${\bf t}_{j'}$
 necessarily makes a vertical switch either before  arriving at the origin ${\bf o}$ of $S_{i,j|i',j'}$
 or this path traverses the vertical basis of this staircase.  Since the completion contains a shortest path connecting the terminals of the vertical basis of $S_{i,j|i',j'}$, combining these two paths, we can derive a shortest path between ${\bf t}_j$ and ${\bf t}'_j$ which turns in $R_{i,i'}\cap R_{j,j'}.$ As a result,  we conclude that at least one shortest path between ${\bf t}_j$ and ${\bf t}_{j'}$ passes via ${\bf o}'.$  This shows that indeed it suffices to take into account the condition (ii) only for each rightmost staircase.  $\Box$
\end{proof}

\begin{lemma} \label{admissible}
The network $N_k(T)$ is an admissible solution for the problem  1-DMMN$(F_k).$
\end{lemma}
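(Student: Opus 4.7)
The plan is to reduce admissibility of $N_k(T)$ to the pairs in the generating set $F = F'_k \cup F''_k$ supplied by Lemma \ref{generating_set}, and then observe that each of the four candidate completions produced by the algorithm already serves all these pairs by construction.

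First, I would recall that $N_k(T)$ is by definition one of the four networks $N_1^h, N_2^h, N_1^v, N_2^v$ (the one of smallest total length), so it suffices to show that each of these four is admissible for 1-DMMN$(F_k)$. Take $N_1^h$ as a representative case (the three others are symmetric). By Lemma \ref{completion}, $N_1^h$ is an optimal completion of $S_1^h$, and the algorithm guarantees by construction that $N_1^h$ contains a shortest $\mathcal{B}$-path of $\Gamma_k$ between (a) every pair defining a vertical 1-strip, through the selected path $\pi$; (b) every pair defining a horizontal 1-strip, via the "rightmost-staircase trick" that folds $\mathbf{t}_j$ into $T_{i,j}$ and then solves issue (iii) on the enlarged terminal set; and (c) every pair $\{\mathbf{t}_{j'}, \mathbf{t}_l\}$ with $\mathbf{t}_l \in T_{i,j}$ in some 1-staircase $\mathcal{S}_{i,j\mid i',j'}$. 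The pairs in (a)--(b) together exhaust $F'_k$ and the pairs in (c) exhaust $F''_k$, so $N_1^h$ contains shortest $\mathcal{B}$-paths for every pair in $F = F'_k \cup F''_k$.

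It then remains to invoke Lemma \ref{generating_set}: since $F$ is a generating set for $F_k$, any $\mathcal{B}$-network that connects every pair of $F$ by a shortest $\mathcal{B}$-path also connects every pair of $F_k$ by a shortest $\mathcal{B}$-path. Applying this to each of the four candidate networks shows they are all admissible, hence so is $N_k(T)$. The only mildly delicate point is checking that horizontal 1-strip pairs are genuinely served by the completion of $S_1^h$, despite $S_1^h$ initially containing only the lower sides of such strips; this is precisely what the rightmost-staircase trick accomplishes, and it has already been justified in the last paragraph of the proof of Lemma \ref{completion}, so no further work is needed.
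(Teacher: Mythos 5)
Your proposal is correct and follows essentially the same route as the paper: invoke Lemma \ref{completion} to see that each of the four candidate networks is a completion of its respective seed network and hence serves every pair of the generating set $F=F'_k\cup F''_k$, then conclude via Lemma \ref{generating_set}. The paper's proof is simply a terser version of this (it absorbs your itemized check of issues (i)--(iii) into the statement of Lemma \ref{completion} and the definition of an optimal completion), so no further commentary is needed.
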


\begin{proof}
By Lemma \ref{completion}, $N^h_1$ is a completion of $S_1^h$ and
thus contains a shortest path between every pairs of vertices from
$F.$ By symmetry, we get the same result for $N_2^h,$ $N_1^v$ and
$N_2^v.$ Since $N_k(T)$ is one of these networks, by Lemma
\ref{generating_set}, it is admissible solution for the problem
1-DMMN$(F_k).$ $\Box$
\end{proof}

\section{Approximation ratio and complexity}

In this section, we will prove the following main result:

\begin{theorem} \label{2.5BMMN} The network $N_k(T)$ is a factor 1.25
approximation
for 1-DMMN$(F_k)$ problem for $k=0,\ldots,m-1.$ The network
$N(T):=\bigcup_{k=0}^{m-1} N_k(T)$ is a factor 2.5 approximation for
the ${\mathcal B}$-MMN problem and can be constructed in $O(mn^3)$
time.
\end{theorem}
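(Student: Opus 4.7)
My plan splits into two parts: deriving the $\mathcal B$-MMN bound from the one-directional approximation, and proving the factor $1.25$ bound for 1-DMMN$(F_k)$.

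For the global reduction I would proceed as follows. By Lemma \ref{admissible}, each $N_k(T)$ is a valid 1-DMMN$(F_k)$-solution, so $N(T)$ contains a shortest $\mathcal B$-path for every pair in $\bigcup_{k=0}^{m-1} F_k = T\times T$ and is a legal $\mathcal B$-Manhattan network. For its total length, $|N(T)|\le \sum_{k=0}^{m-1} |N_k(T)|$, and granted the $1.25$ bound this is at most $\tfrac{5}{4}\sum_k \mathrm{OPT}_k$. The crucial inequality, already highlighted in Section~2.2, is that every $k$-segment of the optimum $\mathcal B$-network $N^*(T)$ belongs to exactly two one-directional subnetworks $N^*_k(T)$ and $N^*_{k-1}(T)$, hence $\sum_k l(N^*_k(T)) = 2\cdot\mathrm{OPT}$; combined with $\mathrm{OPT}_k\le l(N^*_k(T))$ this yields $\sum_k\mathrm{OPT}_k\le 2\cdot\mathrm{OPT}$ and therefore $|N(T)|\le 2.5\cdot\mathrm{OPT}$.

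For the $1.25$ bound on $N_k(T)$, since $N_k(T)$ is the shortest of the four completions $N_1^h, N_2^h, N_1^v, N_2^v$, averaging gives $|N_k(T)|\le\tfrac14(|N_1^h|+|N_2^h|+|N_1^v|+|N_2^v|)$, so it suffices to bound this sum by $5\cdot\mathrm{OPT}_k$. Set $h:=|S_1^h|=|S_2^h|$ and $v:=|S_1^v|=|S_2^v|$, so that the sum equals $2h+2v+(C_1^h+C_2^h+C_1^v+C_2^v)$, where $C_i^j$ denotes the length added by \textsf{Optimal completion}. By Lemma \ref{completion}, each $C_i^j$ is minimum possible, hence bounded by the length of $N^*_k(T)\setminus S_i^j$; in particular $C_i^j\le\mathrm{OPT}_k$. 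The naive sum $4\cdot\mathrm{OPT}_k+2h+2v$ would only give factor $\tfrac{3}{2}$, so a sharper joint charging is required. To obtain it I would exploit the pairwise disjointness of horizontal (resp.\ vertical) 1-strips (Lemma \ref{strips}), the monotonicity of shortest $\mathcal B$-paths (Lemma \ref{shortest_legal_path}), and the clean decomposition into 1-strips and 1-staircases provided by Lemmas \ref{strip_staircase} and \ref{staircase_staircase}, to perform a strip-by-strip and staircase-by-staircase accounting. The key point is that the horizontal mass of $N^*_k$ lying inside horizontal 1-strips, of total weight at least $h$, is charged once to $C_1^h$ and once to $C_2^h$ and is thus subtracted twice from the naive bound; symmetrically for $v$. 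This drops the bound on $C_1^h+C_2^h+C_1^v+C_2^v$ to $4\cdot\mathrm{OPT}_k - h - v$, giving the desired $5\cdot\mathrm{OPT}_k$.

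For complexity, \textsf{Optimal completion} runs in $O(n^3)$, is invoked four times per direction $D_k$, and there are $m$ directions, yielding $O(mn^3)$ overall. The main obstacle is precisely the tight charging in the second paragraph: the per-completion bound $C_i^j\le\mathrm{OPT}_k$ is immediate from Lemma \ref{completion}, but obtaining the joint $5\cdot\mathrm{OPT}_k$ bound requires exploiting the complementarity between the ``lower-side'' and ``upper-side'' horizontal completions (and analogously the ``right-side'' and ``left-side'' vertical completions), so that the horizontal strip mass of $N^*_k$ is effectively saved in the horizontal pair and the vertical strip mass in the vertical pair; the staircase portions must be charged carefully using the disjointness guarantees of Lemmas \ref{strips}, \ref{strip_staircase} and \ref{staircase_staircase} to avoid double counting.
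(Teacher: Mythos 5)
Your global reduction (second paragraph of your proposal) and your complexity count are exactly the paper's argument: $\sum_k \mathrm{OPT}_k(T)\le\sum_k l(N^*_k(T))\le 2\,\mathrm{OPT}(T)$ because each legal segment of $N^*(T)$ lies in exactly two of the one-directional subnetworks. Where you genuinely diverge is the $1.25$ bound for 1-DMMN$(F_k)$. The paper does not average over the four completions: it assumes w.l.o.g.\ $\Lambda_h\le\Lambda_v$, introduces a minimal subnetwork $M\subseteq N^{opt}_k\cap(S^h_1\cup S^h_2)$ that (with vertical switches) serves all horizontal 1-strips, proves $l(M)=\Lambda_h$ (Lemma \ref{lemma_equal_length}), picks the side with $l(M\cap S^h_1)\ge l(M\cap S^h_2)$, and deduces $l(S^h_1\setminus N^{opt}_k)\le l(S^h_1\setminus M)\le 0.5\,\Lambda_h\le 0.25\,l(N^{opt}_k)$, whence $l(N^h_1)\le l(S^h_1\cup N^{opt}_k)\le 1.25\,l(N^{opt}_k)$. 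Your averaging argument is a valid alternative and arguably cleaner, since it avoids the two w.l.o.g.\ choices: writing $C^j_i\le l(N^{opt}_k)-l(N^{opt}_k\cap S^j_i)$ (from Lemma \ref{completion}, since $S^j_i\cup N^{opt}_k$ is an admissible completion) and summing gives $\sum_{i,j}l(N^j_i)\le 2\Lambda_h+2\Lambda_v+4\,\mathrm{OPT}_k-\sum_{i,j}l(N^{opt}_k\cap S^j_i)\le 4\,\mathrm{OPT}_k+\Lambda_h+\Lambda_v\le 5\,\mathrm{OPT}_k$, using $l(N^{opt}_k\cap S^h_1)+l(N^{opt}_k\cap S^h_2)\ge\Lambda_h$ and $\Lambda_h+\Lambda_v\le\mathrm{OPT}_k$. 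Both routes ultimately rest on the same two facts, which are exactly the content of Lemmas \ref{strips} and \ref{lemma_equal_length}.

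Two points need tightening in your sketch. First, the phrase ``charged once to $C^h_1$ and once to $C^h_2$ and thus subtracted twice'' is misleading: each unit of $N^{opt}_k$ lying on a lower side is saved only in the bound for $C^h_1$, and each unit on an upper side only in that for $C^h_2$; the total saving is $\ge\Lambda_h+\Lambda_v$, not $2\Lambda_h+2\Lambda_v$ (which would absurdly give factor $1$). Your final formula $4\,\mathrm{OPT}_k-h-v$ is the correct one. Second, to know that the horizontal mass of $N^{opt}_k$ inside the horizontal 1-strips actually sits on $S^h_1\cup S^h_2$ (rather than strictly between the two sides), you must first normalize $N^{opt}_k$ to lie in the grid $\Gamma_k$ via Lemma \ref{grid}; with that, Lemma \ref{strips} (disjointness of the strips) plus the monotonicity of shortest paths gives the needed $\ge\Lambda_h$. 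No ``careful charging of staircase portions'' is required at all --- the entire saving comes from the strip sides, so your anticipated difficulty there is a non-issue.
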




\noindent{\it Proof of Theorem \ref{2.5BMMN}.} First we prove the
first assertion of the theorem. Let $\Lambda_h=l(S_1^h)=l(S^h_2)$
and $\Lambda_v=l(S_1^v)=l(S_2^v).$ Further, we suppose that
$\Lambda_h\le \Lambda_v.$ Assume $N^{opt}_k$ be an optimal
1-restricted Manhattan network for $F_k.$ Let $M$ be a subnetwork of
$N^{opt}_k\cap (S^h_1\cup S^h_2)$ of minimum total length which
completed with some vertical edges of $N^{opt}_k$ contains a
shortest path between each pair of terminals defining a horizontal
1-strip of $F_k.$ Such $M$ exists because the network $N^{opt}_k\cap
(S^h_1\cup S^h_2)$ already satisfies this requirement. Further, we
assume that $l(M\cap S^h_1)\ge l(M\cap S^h_2).$

\begin{lemma}  $l(M)=\Lambda_h.$
\label{lemma_equal_length}
\end{lemma}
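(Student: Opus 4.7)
The plan is to prove both inequalities $l(M)\le \Lambda_h$ and $l(M)\ge \Lambda_h$. The key preliminary observation, used in both directions, is this: by Lemma \ref{grid} I may assume $N^{opt}_k\subseteq \Gamma_k$, and since a horizontal 1-strip $R_{j,j'}$ is bounded by two \textbf{consecutive} lines of $L_k$, the only $k$-segments of $\Gamma_k$ meeting the interior of $R_{j,j'}$ are those contained in its top or bottom side. Consequently every $k$-segment of $N^{opt}_k$ lying inside such a strip belongs automatically to $S^h_1\cup S^h_2$.

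For the upper bound, for each horizontal 1-strip $R_{j,j'}$ I would fix a shortest $\mathcal B$-path $P_{j,j'}$ inside $N^{opt}_k$ between ${\bf t}_j$ and ${\bf t}_{j'}$. By Lemma \ref{shortest_legal_path}, $P_{j,j'}\subseteq R_{j,j'}$ and the total length of its $k$-segments equals the common length of the two horizontal sides of $R_{j,j'}$. Let $M_0$ be the union, over all horizontal 1-strips, of the $k$-segments of the chosen paths $P_{j,j'}$. By the preliminary observation $M_0\subseteq N^{opt}_k\cap(S^h_1\cup S^h_2)$, and together with the $(k+1)$-segments of the $P_{j,j'}$'s (which are vertical edges of $N^{opt}_k$) it contains a shortest $\mathcal B$-path for every horizontal 1-strip; hence $M_0$ is a valid candidate for $M$. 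Since by Lemma \ref{strips} distinct horizontal 1-strips intersect in at most a single terminal, the contributions to $M_0$ coming from different strips are edge-disjoint, so $l(M_0)$ equals the sum of the side lengths, i.e.\ $\Lambda_h$. Minimality of $M$ then gives $l(M)\le \Lambda_h$.

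For the lower bound, I would exploit the defining property of $M$: for each horizontal 1-strip $R_{j,j'}$, the network $M$ together with some vertical edges of $N^{opt}_k$ contains a shortest $\mathcal B$-path $Q_{j,j'}$ from ${\bf t}_j$ to ${\bf t}_{j'}$. Because $M\subseteq S^h_1\cup S^h_2$ is composed only of $k$-segments, every $k$-segment of $Q_{j,j'}$ must in fact come from $M$; Lemma \ref{shortest_legal_path} then guarantees $Q_{j,j'}\subseteq R_{j,j'}$ and that the total $k$-length of $Q_{j,j'}$ equals the side length of $R_{j,j'}$, so $l(M\cap R_{j,j'})\ge$ side length of $R_{j,j'}$. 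Summing this inequality over the essentially disjoint horizontal 1-strips (Lemma \ref{strips}) yields $l(M)\ge \Lambda_h$, and combined with the upper bound we obtain $l(M)=\Lambda_h$. The only real subtlety is the initial geometric observation about the thinness of 1-strips in the grid $\Gamma_k$; everything else reduces to summing side lengths along the edge-disjoint family of horizontal 1-strips.
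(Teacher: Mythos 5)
Your proof is correct and follows essentially the same route as the paper's: Lemma \ref{strips} makes the horizontal 1-strips contribute to $M$ independently, and each strip contributes exactly one side length (at least that much because $M$ must supply the $k$-segments of some shortest path across the strip, and at most that much by minimality of $M$, since $N^{opt}_k\subseteq\Gamma_k$ provides such a path whose $k$-segments lie on the strip's sides). You merely make explicit two points the paper leaves implicit, namely the two-inequality split and the observation that, inside a strip bounded by consecutive lines of $L_k$, all $k$-segments of $\Gamma_k$ lie on the strip's sides.
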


\begin{proof} By Lemma \ref{strips}, two horizontal 1-strips either are disjoint or intersect only in common terminals, thus any horizontal 1-strip $R_{i,i'}$ contributes to $M$ separately from other horizontal 1-strips. Since the terminals ${\bf t}_i$ and ${\bf t}_i'$ defining $R_{i,i'}$ are connected in $N^{opt}_k$ by a shortest path consisting of two horizontal segments of total length equal to the length of a side of $R_{i,i'}$ and a vertical switch between these segments, from the optimality choice of $M$ we conclude that the contribution of $R_{i,i'}$ to $M$ is precisely the length of one of its sides. $\Box$
\end{proof}

\begin{lemma}  $l(M \cap S^h_1) \ge \text{\it 0.5 } l(M).$
\label{lemma_at_least_one_half}
\end{lemma}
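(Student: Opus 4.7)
The plan is to observe that this inequality is essentially an averaging argument. Since $M$ is defined as a subnetwork of $N^{opt}_k \cap (S^h_1 \cup S^h_2)$, every edge of $M$ lies in $S^h_1$ or in $S^h_2$ (possibly both, in the degenerated case). Therefore the set of edges of $M$ is the union $(M \cap S^h_1) \cup (M \cap S^h_2)$, so by subadditivity of the length function we get
\[
l(M) \;\le\; l(M \cap S^h_1) + l(M \cap S^h_2).
\]

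Next, I will invoke the standing assumption made right before the statement, namely $l(M \cap S^h_1) \ge l(M \cap S^h_2)$. Combined with the inequality above, this yields $l(M) \le 2\, l(M \cap S^h_1)$, which is exactly the claim $l(M \cap S^h_1) \ge 0.5\, l(M)$.

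There is no real obstacle here; the only point that deserves a brief mention is the case of degenerated horizontal 1-strips, where the ``two sides'' $S^h_1$ and $S^h_2$ locally coincide, so an edge may be counted in both. The subadditivity bound $l(M) \le l(M \cap S^h_1) + l(M \cap S^h_2)$ still holds (with strict inequality possibly in that case), so the argument is unaffected. Thus the proof reduces to two lines and does not require any structural properties of strips beyond the already-established fact that $M \subseteq S^h_1 \cup S^h_2$.
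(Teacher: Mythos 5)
Your proof is correct and follows essentially the same route as the paper: split $M$ into $M\cap S^h_1$ and $M\cap S^h_2$ and apply the standing assumption $l(M\cap S^h_1)\ge l(M\cap S^h_2)$. The only difference is that the paper asserts these two pieces form a partition of $M$ (giving equality of lengths), whereas you use only subadditivity of length over the covering, which is a harmless and slightly more careful variant.
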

\begin{proof}
The proof follows from the assumption $l(M\cap S^h_1)\ge l(M\cap S^h_2)$ and the fact
that $M\cap S^h_1$ and $M\cap S^h_2$ form a partition of $M.$ $\Box$
\end{proof}

\begin{lemma}  $l(S^h_1 \setminus M) \le \text{\it 0.25 } l(N^{opt}_k).$
\label{lemma_at_most_one_fourth}
\end{lemma}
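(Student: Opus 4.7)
The plan is to chain together two simple estimates: an upper bound on $l(S^h_1 \setminus M)$ in terms of $\Lambda_h$, and a lower bound on $l(N^{opt}_k)$ in terms of $\Lambda_h$ and $\Lambda_v$.

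First I would observe that $M \cap S^h_1 \subseteq S^h_1$, so
$l(S^h_1 \setminus M) = l(S^h_1) - l(M \cap S^h_1) = \Lambda_h - l(M \cap S^h_1).$
Combining Lemma \ref{lemma_equal_length} (which gives $l(M) = \Lambda_h$) with Lemma \ref{lemma_at_least_one_half} (which gives $l(M \cap S^h_1) \ge 0.5\, l(M)$) immediately yields
$l(S^h_1 \setminus M) \le \Lambda_h - 0.5\, \Lambda_h = 0.5\, \Lambda_h.$

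The main step is then to prove the lower bound $l(N^{opt}_k) \ge 2\Lambda_h$. I would split the edges of $N^{opt}_k$ into $k$-segments (``horizontal'') and $(k+1)$-segments (``vertical''); by Lemma \ref{shortest_legal_path} these are the only edges present. For each horizontal 1-strip $R_{i,i'}$, any shortest ${\mathcal B}$-path between ${\bf t}_i$ and ${\bf t}_{i'}$ in $N^{opt}_k$ is monotone with respect to $\ell_k$ and $\ell_{k+1}$, hence lies entirely inside the parallelogram $R_{i,i'}$, and its $k$-segments project onto a full horizontal side of $R_{i,i'}$, contributing horizontal length equal to $l(s'_{i,i'})$. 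By Lemma \ref{strips} distinct horizontal 1-strips share at most terminals, so the horizontal contributions of distinct strips lie in edge-disjoint regions and add up; the total horizontal length of $N^{opt}_k$ is therefore at least $\sum_{R_{i,i'}} l(s'_{i,i'}) = \Lambda_h$. The same argument applied to vertical 1-strips gives at least $\Lambda_v$ of vertical length, hence $l(N^{opt}_k) \ge \Lambda_h + \Lambda_v$.

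The standing assumption $\Lambda_h \le \Lambda_v$ then yields $l(N^{opt}_k) \ge 2\Lambda_h$, so that
$l(S^h_1 \setminus M) \le 0.5\, \Lambda_h \le 0.25\, l(N^{opt}_k),$
as claimed. The main (and only real) obstacle is the additivity argument for the horizontal/vertical length in $N^{opt}_k$: one has to invoke the monotonicity of shortest ${\mathcal B}$-paths from Lemma \ref{shortest_legal_path} together with the near-disjointness of 1-strips from Lemma \ref{strips} to ensure that the horizontal portions of the shortest paths within different horizontal 1-strips do not overlap and cannot be re-used. $\Box$
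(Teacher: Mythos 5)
Your proof is correct and follows essentially the same route as the paper: bound $l(S^h_1\setminus M)\le 0.5\,\Lambda_h$ via Lemmas \ref{lemma_equal_length} and \ref{lemma_at_least_one_half}, then use $l(N^{opt}_k)\ge \Lambda_h+\Lambda_v$ together with $\Lambda_h\le\Lambda_v$. The only difference is that you spell out the justification of $l(N^{opt}_k)\ge \Lambda_h+\Lambda_v$ (monotonicity of shortest ${\mathcal B}$-paths plus near-disjointness of 1-strips), which the paper simply asserts as a consequence of Lemma \ref{strips}; your added argument is sound.
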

\begin{proof}
Since $l(S^h_1 \setminus M) = l(S^h_1) - l(M\cap S^h_1),$ by Lemma
\ref{lemma_equal_length} and \ref{lemma_at_least_one_half} we get $l(S^h_1
\setminus M) \le
\text{\it 0.5 }l(M) = \text{\it 0.5 } \Lambda_h \le \text{\it 0.25 }
l(N^{opt}_k).$ The last inequality follows from $l(N^{opt}_k)\ge
\Lambda_h+ \Lambda_v$ (a consequence of Lemma \ref{strips}) and the
assumption $\Lambda_h\le \Lambda_v.$ $\Box$
\end{proof}

Now, we complete the proof of Theorem \ref{2.5BMMN}. Note that
$$l(N_k(T))\le l(N^h_1)=l(S_1^h\cup N_1^h) \le l(S_1^h\cup N_k^{opt}) = l(S_1^h\setminus N_k^{opt}) + l(N_k^{opt}) \le \text{\it 1.25 } l(N_k^{opt}).$$
The first inequality follows from the choice of $N_k(T)$ as the
shortest network among the four completions
$N_1^h,N_2^h,N_1^v,N^v_2.$  The second inequality follows from Lemma
\ref{completion} and the fact that $N_k^{opt}$ (and therefore
$S_1^h\cup N_k^{opt}$) is an admissible solution for the completion
problem for $S_1^h.$ Finally, the last inequality follows from Lemma
\ref{lemma_at_most_one_fourth} by noticing that $M\subseteq
N_k^{opt}$  and thus $l(S_1^h\setminus N_k^{opt})\le
l(S_1^h\setminus M)\le \text{\it 0.25 } l(N^{opt}_k).$ This concludes the proof
of the first assertion of Theorem \ref{2.5BMMN}.

 Let  $N^*(T)$ be a
minimum  ${\mathcal B}$-Manhattan network.  For  each direction
$D_k,$ let $N^*_k(T)$ be the set of $k$-segments and
$(k+1)$-segments of $N^*(T).$ By Lemma \ref{shortest_legal_path} the
network $N^*_k(T)$ is an admissible solution for 1-DMMN$(F_k)$
problem, thus $l(N^*_k(T))\ge {\rm OPT}_k(T).$  Any $k$-segment of
$N^*(T)$ belongs to exactly two one-directional networks
$N^*_{k}(T)$ and $N^*_{k-1}(T),$ we conclude that
$\sum_{k=0}^{m-1}{\rm OPT}_k(T)\le \sum_{k=0}^{m-1}l(N^*_k(T))\le
\text{\it 2 } l(N^*(T))= \text{\it 2 }{\rm OPT}(T).$ The first assertion of Theorem \ref{2.5BMMN}
implies that $l(N_k(T))\le
\text{\it 1.25 } l(N_k^{opt})=\text{\it 1.25 }{\rm OPT}_k(T)$ for
all $k=0,\ldots,m-1,$ hence $$l(N(T))\le \sum_{k=0}^{m-1}
l(N_k(T))\le \text{ \it 1.25 }\sum_{k=0}^{m-1}{\rm OPT}_k(T)\le
\text{ \it 2.5 }{\rm OPT}(T).$$ This concludes the proof that the approximation
factor of $N(T):=\bigcup_{k=0}^{m-1} N_k(T)$ is 2.5.

To finish the proof of Theorem \ref{2.5BMMN}, it remains to analyze the complexity of the algorithm. First, we
use a straightforward analysis to establish a $O(mn^4)$
bound on its running time. Then we show that this bound can be reduced
to $O(mn^3)$ by using a more advanced implementation. The time complexity
of {\sf Optimal completion$(S_1^h)$} is dominated by the
execution of the dynamic programming algorithm that computes an
optimal completion for each staircase $S_{i,j|i',j'}$. The staircase
$S_{i,j|i',j'}$ is processed $O(|T_{i,j}|)$ times (once for each
shortest $({\bf t}_i,{\bf t}_{i'})$-path in $\Gamma_k$) using a
$O(|T_{i,j}|^3)$-time dynamic programming algorithm (each of the
$O(|T_{i,j}|^2)$ entries of the dynamic programming table is
computed in time $O(|T_{i,j}|)$). Therefore, each staircase
$S_{i,j|i',j'}$ contributes $O(|T_{i,j}|^4)$ to the execution of the
algorithm {\sf Optimal completion$(S_1^h)$}. Since each terminal
belongs to at most two staircases, the overall complexity of {\sf
Optimal completion$(S_1^h)$} is $O(n^4).$ This algorithm is
processed to compute four optimal completions for each of the $m$
directions. Therefore the total complexity of our 2.5-approximation
algorithm for the ${\mathcal B}$-MMN problem is $O(mn^4).$

The following simple idea allows to reduce the contribution of
each staircase $S_{i,j|i',j'}$ to $O(|T_{i,j}|^3)$ instead of $O(|T_{i,j}|^4),$
leading to a total complexity of $O(mn^3).$ First, note that among all
$O(|T_{i,j}|^2)$ subproblems, whose optima are stored in the dynamic
programming table, only $O(|T_{i,j}|)$ are affected by the choice
of the $\pi$ (those are the subproblems containing the highest and the rightmost
terminal of $T_{i,j}$). Therefore, instead of solving each of $O(|T_{i,j}|^2)$ subproblems
$O(|T_{i,j}|)$ times, we solve the subproblems not affected by the choice of $\pi$ only once. Now,
consider the number of subproblems obtained
by taking into account the choice of $\pi,$ then it is easy to verify
that the total number of subproblems encountered is not $O(|T_{i,j}|^3)$
but only $O(|T_{i,j}|^2).$ Since each entry of the dynamic programming
table is computed in time $O(|T_{i,j}|),$ we obtain a contribution
of $O(|T_{i,j}|^3)$ for each staircase $S_{i,j|i',j'},$
and thus a total complexity of $O(mn^3).$
$\Box$

\section{Conclusion}

In this paper, we presented a combinatorial factor 2.5 approximation
algorithm for  NP-hard minimum Manhattan network problem in normed
planes with polygonal unit balls (the ${\mathcal B}$-MMN problem).
Its complexity is $O(mn^3),$ where $n$ is the number of terminals
and $2m$ is the number of extremal points of the unit ball
${\mathcal B}$. Any ${\mathcal B}$-Manhattan network $N(T)$ can be
decomposed into $m$ subnetworks, one for each direction of the
normed plane. Each such subnetwork $N_k(T)$ ensures the existence of
shortest paths between the pairs of terminals for which all legal
paths use  only $k$- and $(k+1)$-segments. We presented a factor
1.25 $O(n^3)$ algorithm for computing one-directional Manhattan
networks, which lead  to a factor 2.5 algorithm for minimum
${\mathcal B}$-Manhattan network problem. One of the open questions
is whether {\it the one-directional Manhattan network problem is
NP-complete?}  Another open question is  {\it designing a factor 2
approximation algorithm for ${\mathcal B}$-MMN,} thus meeting the
current best approximation factor for the classical MMN problem.
Notice that  polynomial time algorithm for 1-DMMN problem will
directly lead to a factor 2 approximation for ${\mathcal B}$-MMN.

Notice some similarity between the 1-DMMN problem  and the oriented
minASS problem investigated in relationship with the minimum
stabbing box problem \cite{Ha}, alias the minimum arborally
satisfied superset problem (minASS) \cite{DeHaJaKaPa}. In the {\it
minASS problem}, given a set of $n$ terminals $T\subset {\mathbb
R}^2,$ one need to add a minimum number of points $S$ such that for
any pair ${\bf t}_i,{\bf t}_j\in T\cup S,$ either ${\bf t}_i{\bf
t}_j$ is a horizontal or a vertical segment, or the (axis-parallel)
rectangle $R_{i,j}$ spanned by ${\bf t}_i,{\bf t}_j$ contains a
third point of $T\cup S.$ The {\it oriented minASS problem} is
analogous to minASS problem except that the above requirement holds
only for pairs ${\bf t}_i,{\bf t}_j\in T\cup S$ such that $\{
i,j\}\in F_0,$ i.e., ${\bf t}_i$ and ${\bf t}_j$ lie in the first
and the third quadrants of the plane with the same origin. The
authors of \cite{DeHaJaKaPa} presented a polynomial primal-dual
algorithm for oriented minASS problem, however, in contrast to ${\mathcal B}$-MMN problem,
solving oriented
minASS problems for pairs of $F_0$ and $F_1$ (where $F_0\cup
F_1=(T\cup S)\times (T\cup S)$) does not lead to an admissible
solution and thus to a constant factor approximation  for minASS
(which, as we have shown before, is the case for 1-DMMN and
${\mathcal B}$-MMN problems).

\newpage

\end{document}